\documentclass[11pt]{article}
\usepackage[margin=1in]{geometry}
\usepackage[final]{microtype}
\usepackage{epsfig}
\usepackage{graphics}
\usepackage{latexsym}
\usepackage{amsmath}
\usepackage{amsfonts}
\usepackage{amssymb}
\usepackage{mathrsfs}
\usepackage[dvipsnames]{xcolor}
\usepackage{amsthm}
\usepackage{xspace}
\usepackage{epstopdf}
\usepackage{float}
\usepackage{hyperref}
\usepackage{pgfplots}
\pgfplotsset{compat=1.18}
\usepackage{longtable}

\numberwithin{equation}{section}
\usepackage[ruled,vlined]{algorithm2e}
\SetArgSty{textrm}

\usepackage[english]{babel}
\usepackage[nottoc]{tocbibind}

\usepackage{caption}
\usepackage{subcaption}

\usepackage{pifont}
\usepackage{booktabs}
\usepackage{multirow}

\theoremstyle{plain}
\newtheorem{theorem}{Theorem}

\newtheorem{corollary}{Corollary}

\newtheorem{lemma}{Lemma}

\theoremstyle{definition}

\theoremstyle{remark}

\makeatletter
\@addtoreset{equation}{section}
\def\section{\@startsection {section}{1}{\z@}{-3.5ex plus -1ex minus
 -.2ex}{2.3ex plus .2ex}{\large\bf}}
\makeatother

\def\bfm#1{\mbox{\boldmath$#1$}}

\def\0{\bfm 0}

\DeclareMathAlphabet{\mathpzc}{OT1}{pzc}{m}{it}
\newcommand{\argmin}{\operatorname*{arg\,min}}
\newcommand{\MS}{\operatorname{MS}}
\newcommand{\OPT}{\operatorname{OPT}}

\newcounter{my}

\newcounter{my2}

\newcounter{my3}

\newcounter{my4}

\newcounter{my5}

\newcounter{my6}

\allowdisplaybreaks

\begin{document}

\title{Consistency--Robustness Tradeoffs for\\
 Strategyproof Scheduling with Predictions}
\author{Hau Chan$^{1}$\quad Jianan Lin$^{2}$\quad Chenhao Wang $^{3,4}$\\[0.75em]
$1$ University of Nebraska-Lincoln\\
$2$ Rensselaer Polytechnic Institute\\
$3$ Beijing Normal University-Zhuhai\\
$4$ Beijing Normal-Hong Kong Baptist University
}
\date{}
\maketitle

\begin{abstract}
We study strategyproof scheduling on \(n\) unrelated machines with predictions.
Each machine is controlled by an agent with privately known processing times, while the mechanism receives a public prediction of the processing-time matrix before the agents report.
The objective is to minimize the makespan subject to strategyproofness.
We measure performance by consistency, the approximation guarantee for correct predictions, and robustness, the worst-case guarantee for arbitrary predictions.

We introduce \textsc{EdgeSkip}, a deterministic strategyproof member of the class of job-wise weighted mechanisms.
Such mechanisms allocate each job independently using prediction-dependent weights.
Using a polynomial-time \(2\)-approximate reference schedule computed from the prediction and a standard tradeoff parameter, \textsc{EdgeSkip} is \(4\)-consistent and \((2n-2)\)-robust, improving on the \((6,2n)\) guarantee of Balkanski, Gkatzelis, and Tan.
Without computational restrictions, \textsc{EdgeSkip} with an optimal reference schedule is \(C\)-consistent and \(\max\{n,(n-1)C/(C-1)\}\)-robust for every \(C>1\).
We prove a matching information-theoretic lower bound for all job-wise weighted mechanisms, thereby determining the exact consistency--robustness tradeoff for this class.
For arbitrary deterministic strategyproof mechanisms, we establish the robustness lower bound \(\max\{n,C/(C-1)\}\) for every \(C>1\).
We also study an error-tolerant variant and improve upon prior guarantees.
Experiments demonstrate that our mechanisms achieve good empirical performance.
\end{abstract}

\section{Introduction}\label{sec:introduction}

Scheduling on unrelated machines is a fundamental problem in approximation algorithms and algorithmic mechanism design \cite{lenstra1990approximation,nisan1999algorithmic}.
There are \(m\) jobs and \(n\) machines, each controlled by an agent, and agent \(i\) incurs processing time \(p_{ij}\) if job \(j\) is assigned to its machine.
The load of a machine is the total processing time of the jobs assigned to it, and the makespan is the maximum load over all machines.
The goal is to find an assignment of jobs to machines that minimizes the makespan.
When the processing times are public, a polynomial-time algorithm can compute an assignment with makespan at most twice the optimum \cite{lenstra1990approximation}.

In the strategic version, the processing-time vector of each agent is private information, and an agent may benefit from misreporting it.
A mechanism therefore chooses both an assignment and payments.
The utility of an agent is its payment minus the total true processing time of the jobs assigned to its machine.
A mechanism is strategyproof, or dominant-strategy incentive compatible, if truthful reporting maximizes every agent's utility regardless of the reports of the other agents.

Nisan and Ronen initiated the study of this problem and designed a deterministic strategyproof mechanism with approximation ratio \(n\) \cite{nisan1999algorithmic}.
They conjectured that no deterministic strategyproof mechanism can achieve a better approximation ratio, even without computational restrictions.
After a long sequence of studies establishing increasingly strong lower bounds on the approximation ratio, this conjecture was proved by Christodoulou, Koutsoupias, and Kov{\'a}cs \cite{christodoulou2023proof}.
Thus, even without computational restrictions, deterministic strategyproof mechanisms cannot achieve an approximation ratio better than \(n\), while the classical optimization version of the problem admits a polynomial-time \(2\)-approximation algorithm.
This sharp gap makes it natural to ask whether useful information about the problem instance can improve the approximation guarantee while maintaining strategyproofness.

\paragraph{Learning-Augmented Strategyproof Scheduling.}
The learning-augmented framework uses predictions to improve performance on well-predicted instances while retaining guarantees when predictions are inaccurate \cite{purohit2018improving,lykouris2021competitive,mitzenmacher2022algorithms}.
It has also been applied to strategic settings, including social choice, facility location, and general mechanism-design problems \cite{agrawal2022learning,xu2022mechanism}.
For strategyproof scheduling, the mechanism receives a public prediction \(\widehat{\mathbf p}\) of the true processing-time matrix \(\mathbf p=(p_{ij})\) before the agents report their types \cite{balkanski2023strategyproof}.
The prediction is supplied externally, is not controlled by the agents, and may be inaccurate.
Such a mechanism is evaluated by consistency and robustness.
Its \emph{consistency} is the approximation ratio when the prediction is correct, while its \emph{robustness} is the worst-case approximation ratio when the prediction is arbitrary.
The goal is to design a mechanism that is strategyproof for every fixed prediction and achieves the best possible consistency--robustness tradeoff.

The two extreme ways of using the prediction do not yield a satisfactory tradeoff.
Fixing an optimal schedule computed from the predicted matrix \(\widehat{\mathbf p}\), independently of the reports, gives perfect consistency and is strategyproof with report-independent payments, but its robustness is unbounded.
Conversely, ignoring the prediction and applying the mechanism of Nisan and Ronen achieves the optimal robustness factor of \(n\), but its consistency is also \(n\).

Balkanski, Gkatzelis, and Tan \cite{balkanski2023strategyproof} introduced \textsc{ScaledGreedy}, the first strategyproof mechanism with constant consistency and \(O(n)\) robustness for this setting.
Starting from a reference schedule for the predicted matrix, \textsc{ScaledGreedy} greedily moves jobs in decreasing order of predicted speedup and closes a machine once its incoming predicted load reaches a prescribed threshold. The resulting target schedule determines report-independent job-wise weights, and each job is then assigned independently to a machine minimizing its weighted reported processing time.
Given a reference schedule whose makespan is within a factor \(\kappa\) of the predicted optimum and a tradeoff parameter \(\gamma>0\), the mechanism is \((2+\gamma)\kappa\)-consistent and \(n(1+1/\gamma)\)-robust.
With the polynomial-time choices \(\kappa=2\) and \(\gamma=1\), these bounds become \(6\) and \(2n\).
They also introduced \textsc{ErrorTolerantScaledGreedy}, a tolerance-parameterized variant that biases the weighted allocation toward the prediction-derived target schedule and provides an error-dependent guarantee.

Both mechanisms belong to a natural class that we call \emph{job-wise weighted} mechanisms.
The prediction fixes a positive weight for every machine--job pair, and each job is assigned independently to the machine with the minimum weighted reported processing time.
Because the weights are independent of the reports, these allocation rules admit strategyproof payments.
The preceding results leave two central questions.
First, can one improve the polynomial-time guarantees and characterize the optimal consistency--robustness tradeoff within this weighted class?
Second, what lower bounds remain unavoidable for arbitrary deterministic strategyproof mechanisms, without restrictions on running time or allocation structure?

\subsection{Our Contributions}

We improve the polynomial-time guarantees, characterize the weighted frontier without computational restrictions, and make progress on the second question.
As usual, we describe allocation rules, while the strategyproof payments are guaranteed by their incentive properties and are made explicit in the preliminaries.

\begin{itemize}

    \item \textbf{Improved upper bounds (Section~\ref{sec:upper-bound}).}
    We introduce \textsc{EdgeSkip}, a prediction-dependent job-wise weighted mechanism.
    Starting from a \(\kappa\)-approximate reference schedule computed from the prediction, it constructs a target schedule by considering predicted speedups while skipping every proposed reassignment that would violate an incoming-load capacity.
    For every \(C>\kappa\), \textsc{EdgeSkip} can be parameterized to be strategyproof and \(C\)-consistent. Its robustness guarantee, denoted by \(R_{\kappa}(C)\), is the piecewise function stated in Theorem~\ref{thm:edge-skip}.
    The robustness analysis reduces the load on each output machine to the weight of a partial matching and uses a dependency forest to charge rejected reassignments without double counting.

    The general guarantee has two notable consequences.

    \noindent(1) With a polynomial-time \(2\)-approximate reference schedule, \textsc{EdgeSkip} can be parameterized to be \(4\)-consistent and \((2n-2)\)-robust, improving the \((6,2n)\) consistency--robustness guarantee of \textsc{ScaledGreedy}~\cite{balkanski2023strategyproof} and the \((4,2.5n)\) guarantee of Cole, Gupta, and Jangir~\cite{cole2025parsimonious}.
    \par\noindent(2) With an optimal reference schedule, \textsc{EdgeSkip} is \(C\)-consistent with robustness at most
    \(
        \max\left\{n,\frac{(n-1)C}{C-1}\right\}
    \)
    for every \(C>1\).

    We also introduce the error-tolerant variant \textsc{TolerantEdgeSkip}.
    For every \(C>\kappa\) and tolerance \(\bar\eta\ge1\), its approximation ratio is at most \(C\eta^2\) whenever the multiplicative prediction error \(\eta\) is at most \(\bar\eta\), and at most
    \(
        1+\bar\eta^2\bigl(R_{\kappa}(C)-1\bigr)
    \)
    for arbitrary predictions.

    \item \textbf{Lower bounds (Section~\ref{sec:lower-bounds}).} We prove that the \(C\)-consistency guarantee for any job-wise weighted mechanism requires robustness at least
    \(
        R\ge\max\left\{n,\frac{(n-1)C}{C-1}\right\},
    \)
    for every \(C>1\). This exactly matches \textsc{EdgeSkip} with an optimal reference schedule and determines the consistency--robustness frontier for the entire weighted class.
    Beyond the weighted class, for arbitrary deterministic strategyproof mechanisms with no computational restriction, we show the lower bound
    \(
        R\ge\max\left\{n,\frac{C}{C-1}\right\}
    \)
    for every \(C>1\), while \(C=1\) forces unbounded robustness.

    \item \textbf{Empirical evaluation (Section~\ref{sec:experiments}).}
    We evaluate the polynomial-time mechanisms on the Heterogeneous Computing Scheduling Problem (HCSP) benchmark instances of Braun et al.~\cite{braun2001comparison}, using predictions obtained by synthetically perturbing the true processing-time matrices.
    Across the 21 synthetic-noise settings, \textsc{EdgeSkip} improves on \textsc{ScaledGreedy} in 17, while \textsc{TolerantEdgeSkip} improves on \textsc{ErrorTolerantScaledGreedy} in 16; both comparisons favor our mechanisms at every sparse-error level.
\end{itemize}

\subsection{Additional Related Work}

In the literature on strategyproof scheduling, Nisan and Ronen proved the first lower bound of two for deterministic strategyproof mechanisms \cite{nisan1999algorithmic}.
It was subsequently improved through a sequence of works \cite{ashlagi2012optimal,christodoulou2009lower,koutsoupias2013lower,giannakopoulos2021new,dobzinski2020improved}, followed by a lower bound growing with the number of machines \cite{christodoulou2021nisan} and the eventual proof of the tight \(n\) lower bound \cite{christodoulou2023proof}.
Beyond the unrestricted unrelated-machines domain, Christodoulou, Koutsoupias, and Kov{\'a}cs studied strategyproof scheduling on graph- and hypergraph-restricted domains \cite{christodoulou2025truthful}.

A separate line of work studies algorithms and mechanisms with predictions.
The consistency--robustness viewpoint was developed to combine accurate predictions with worst-case guarantees. Representative early results include learning-augmented caching and general online problems \cite{purohit2018improving,lykouris2021competitive}, and an overview is given by \cite{mitzenmacher2022algorithms}.
Scheduling has been a central application, including work on predicted job sizes, learned weights, online unrelated-machine load balancing, and untrusted predictions \cite{mitzenmacher2020scheduling,lattanzi2020online,li2021online,bampis2022scheduling}.
These works study nonstrategic scheduling or online variants.

Predictions have also been used in strategic settings, including facility location and more general mechanism-design problems \cite{agrawal2022learning,xu2022mechanism}.
Most directly related to our work, Balkanski, Gkatzelis, and Tan \cite{balkanski2023strategyproof} studied strategyproof scheduling with predictions and developed the weighted mechanisms and error-tolerant construction that form the starting point for our upper bounds.
Cole, Gupta, and Jangir~\cite{cole2025parsimonious} replace the full \(n\times m\) processing-time matrix with only \(1+n+m\) predicted quantities: the optimal makespan, one dual-derived value for each machine, and one recommended machine for each job. Their deterministic strategyproof mechanism is \(C\)-consistent and \(n\bigl(2+1/(C-2)\bigr)\)-robust for every \(C>2\); in particular, it achieves the point \((C,R)=(4,2.5n)\).

\section{Preliminaries}\label{sec:preliminaries}

There is a set \(N=[n]\) of machines and a set \(M=[m]\) of jobs.
The processing time of job \(j\) on machine \(i\) is \(p_{ij}>0\), and we write \(\mathbf p=(p_{ij})_{i\in N,j\in M}\) for the resulting instance.
An allocation \(\mathbf x\in\{0,1\}^{n\times m}\) assigns every job to exactly one machine, subject to \(\sum_{i\in N}x_{ij}=1\) for each \(j\in M\).
We use the terms allocation and schedule interchangeably.
The load of machine \(i\) and the makespan of the allocation are
\begin{align*}
    L_i(\mathbf p,\mathbf x)=\sum_{j\in M}p_{ij}x_{ij},
    \qquad
    \MS(\mathbf p,\mathbf x)=\max_{i\in N}L_i(\mathbf p,\mathbf x).
\end{align*}
We use \(\mathbf x^*(\mathbf p)\) for a fixed optimal allocation that minimizes the makespan and write
\begin{align*}
    \OPT(\mathbf p)
    =\min_{\mathbf x}\MS(\mathbf p,\mathbf x)
    =\MS(\mathbf p,\mathbf x^*(\mathbf p)).
\end{align*}

In the strategic setting, each machine \(i\) is controlled by an agent whose private type is the vector \(\mathbf p_i=(p_{ij})_{j\in M}\).
A deterministic direct-revelation mechanism consists of an allocation rule \(\mathbf x\) and payments \(\boldsymbol\pi=(\pi_i)_{i\in N}\).
If the true type of machine \(i\) is \(\mathbf p_i\), its utility is its payment minus the true processing cost of the jobs assigned to it.
The mechanism is \emph{strategyproof}, or dominant-strategy incentive compatible (DSIC), if truthful reporting maximizes this utility for every machine and every profile of reports by the other machines.

We will often use \emph{weak monotonicity} in the arguments.
Fix the reports of all machines other than \(i\), and consider two reports \(\mathbf p_i\) and \(\mathbf p'_i\) by machine \(i\).
If \(\mathbf x_i\) and \(\mathbf x'_i\) are the corresponding allocation vectors of that machine, then weak monotonicity requires
\begin{align*}
    \sum_{j\in M}(p_{ij}-p'_{ij})(x_{ij}-x'_{ij})\le 0.
\end{align*}
It is known that a deterministic allocation rule admits DSIC payments if and only if it satisfies this weak monotonicity condition \cite{saks2005weak}.
Thus, as is standard in strategyproof scheduling \cite{nisan1999algorithmic}, we usually describe and analyze the allocation rule, while the corresponding payments are implicit.

\paragraph{Predictions, consistency, and robustness.} In the learning-augmented framework,
before the machines report their types, the mechanism receives a public prediction \(\widehat{\mathbf p}=(\widehat p_{ij})_{i\in N,j\in M}\) of the processing-time matrix.
The prediction is not controlled by the agents.
We write \(\mathbf x(\mathbf p;\widehat{\mathbf p})\) for the allocation on reports \(\mathbf p\) under prediction \(\widehat{\mathbf p}\).
For every fixed prediction \(\widehat{\mathbf p}\), the map \(\mathbf p\mapsto\mathbf x(\mathbf p;\widehat{\mathbf p})\) is required to be a DSIC allocation rule; equivalently, it must satisfy weak monotonicity.

The approximation ratio on the pair \((\mathbf p,\widehat{\mathbf p})\) is
\begin{align*}
    \rho(\mathbf p,\widehat{\mathbf p})=
    \frac{\MS(\mathbf p,\mathbf x(\mathbf p;\widehat{\mathbf p}))}{\OPT(\mathbf p)}.
\end{align*}
A mechanism is \(C\)-\emph{consistent} if its approximation ratio is at most \(C\) when the prediction is correct (i.e., \(\mathbf p=\widehat{\mathbf p}\)), and it is \(R\)-\emph{robust} if the same guarantee \(R\) holds for arbitrary predictions \cite{balkanski2023strategyproof}.
Formally,
\begin{align*}
    \sup_{\mathbf p}\rho(\mathbf p,\mathbf p)\le C,
    \qquad
    \sup_{\mathbf p,\widehat{\mathbf p}}\rho(\mathbf p,\widehat{\mathbf p})\le R.
\end{align*}
The case \(C=1\) is called perfect consistency.
For error-dependent guarantees, we use the standard symmetric multiplicative error:
\begin{align*}
    \eta(\mathbf p,\widehat{\mathbf p})
    =\max_{i\in N,j\in M}
    \left\{
        \frac{p_{ij}}{\widehat p_{ij}},
        \frac{\widehat p_{ij}}{p_{ij}}
    \right\}.
\end{align*}
Thus \(\eta\ge1\), with \(\eta=1\) exactly when the prediction is correct.
An error-tolerant mechanism is also given a tolerance parameter \(\bar\eta\ge1\);
its error-dependent guarantee applies when \(\eta(\mathbf p,\widehat{\mathbf p})\le\bar\eta\).

\paragraph{Reference schedules.}
Our mechanisms first compute a schedule from the prediction $\widehat{\mathbf p}$.
Let \(\widehat\sigma:M\to N\) be such a reference schedule.
We call $\widehat\sigma$ a \(\kappa\)-approximate reference schedule if its predicted makespan satisfies \(\max_{i\in N}\sum_{j:\widehat\sigma(j)=i}\widehat p_{ij}\le\kappa\OPT(\widehat{\mathbf p})\).
For each job \(j\), we call \(\widehat\sigma(j)\) its reference machine.
The case \(\kappa=1\) is useful for stating the tradeoff without a polynomial-time requirement.
For polynomial-time mechanisms, we use a schedule returned by the classical \(2\)-approximation algorithm, so \(\kappa=2\) \cite{lenstra1990approximation}.

\paragraph{Job-wise weighted mechanisms.}
Following \cite{balkanski2023strategyproof}, we use the term \emph{job-wise weighted mechanism} for the following class.
Let \(\mathbf r=(r_{ij})_{i\in N,j\in M}\) be a positive weight matrix that may depend on the prediction and on public mechanism parameters, but not on the machines' reports.
The associated weighted mechanism \(M_{\mathbf r}\) assigns each job independently according to
\begin{align*}
    M_{\mathbf r}(\mathbf p)(j)
    \in\argmin_{i\in N}\{r_{ij}p_{ij}\}.
\end{align*}
We call \(r_{ij}p_{ij}\) the \emph{weighted score} of machine \(i\) for job \(j\).
All ties are resolved by a deterministic rule fixed before the machines report their types; the mechanisms developed below give priority to a machine selected from the prediction.
For a fixed prediction and job, and fixed reports by the other machines, there is a threshold below which machine \(i\) wins the job; equality is resolved by the fixed tie-breaking rule.
Paying this threshold for each allocated job and summing the payments over jobs gives a DSIC implementation.

\section{Improved Upper Bounds}\label{sec:upper-bound}

This section is organized as follows. We first introduce \textsc{EdgeSkip} and state its consistency and robustness guarantees. We then prove these guarantees, using a partial-matching reduction and a forest-charging argument for robustness. Finally, we develop an error-tolerant variant and establish both its error-dependent approximation guarantee and its unconditional robustness guarantee.

\subsection{The EdgeSkip Mechanism}\label{sec:edge-skip}

We introduce a new job-wise weighted mechanism, called \textsc{EdgeSkip}.
Fix \(\gamma>0\) and a \(\kappa\)-approximate reference schedule \(\widehat\sigma\), and let \(T=\max_{i\in N}\sum_{j:\widehat\sigma(j)=i}\widehat p_{ij}\) be its predicted makespan.
Our mechanism first constructs a target machine \(y(j)\) for every job.
Initially, the target is the reference machine \(\widehat\sigma(j)\).
For each machine \(i\), a counter \(I_i\) starts at zero and records the predicted load of the jobs moved into \(i\).
The mechanism considers every pair \((i,j)\) with \(\widehat p_{ij}<\widehat p_{\widehat\sigma(j),j}\), in nonincreasing order of the predicted speedup \(\frac{\widehat p_{\widehat\sigma(j),j}}{\widehat p_{ij}}\).
It moves \(j\) to \(i\) only if \(j\) has not moved before and the new incoming load would remain at most \(\gamma T\).
We call the inequality \(I_i+\widehat p_{ij}\le\gamma T\) the \emph{capacity test} for pair \((i,j)\).

\begin{algorithm}[H]
\caption{\textsc{EdgeSkip}}\label{alg:edge-skip}
\KwIn{Prediction \(\widehat{\mathbf p}\), reports \(\mathbf p\), reference schedule \(\widehat\sigma\), and parameter \(\gamma>0\)}
Set \(T\leftarrow\max_{i\in N}\sum_{j:\widehat\sigma(j)=i}\widehat p_{ij}\) to be
the predicted makespan of \(\widehat\sigma\)\;
Set \(y(j)\leftarrow\widehat\sigma(j)\) for every \(j\in M\), and \(I_i\leftarrow0\) for every \(i\in N\)\;
Let \(E=\{(i,j):\widehat p_{ij}<\widehat p_{\widehat\sigma(j),j}\}\)\;
Order \(E\) by nonincreasing \(\frac{\widehat p_{\widehat\sigma(j),j}}{\widehat p_{ij}}\), with ties fixed in advance\;
\For{\((i,j)\in E\), in this order}{
    \If{\(y(j)=\widehat\sigma(j)\) and \(I_i+\widehat p_{ij}\le\gamma T\)}{
        \(y(j)\leftarrow i\) and \(I_i\leftarrow I_i+\widehat p_{ij}\)\;
    }
}
Set \(r^0_{ij}\leftarrow\max\{1,\frac{\widehat p_{y(j),j}}{\widehat p_{ij}}\}\) for every \(i,j\)\;
Assign \(j\) to a machine minimizing weighted score \(r^0_{ij}p_{ij}\), giving \(y(j)\) priority in a tie\;
\end{algorithm}

The resulting assignment \(y\) is the \emph{target schedule}, and \(\mathbf r^0\) is the corresponding \emph{base weight matrix}.
For each job \(j\), a machine with \(\widehat p_{ij}\ge\widehat p_{y(j),j}\) receives weight one, while a machine with \(\widehat p_{ij}<\widehat p_{y(j),j}\) receives weight \(\frac{\widehat p_{y(j),j}}{\widehat p_{ij}}\).
Thus
\(
    r^0_{ij}\widehat p_{ij}=\max\{\widehat p_{ij},\widehat p_{y(j),j}\}\ge\widehat p_{y(j),j}.
\)
For \(i=y(j)\), equality holds, and the tie-breaking rule assigns \(j\) to \(y(j)\) when \(\mathbf p=\widehat{\mathbf p}\); under other reports, a different machine may minimize \(r^0_{ij}p_{ij}\).

\textsc{EdgeSkip} and \textsc{ScaledGreedy} \cite{balkanski2023strategyproof} examine the same machine--job pairs in the same order and differ only when a candidate pair \((i,j)\) fails the capacity test, that is, when \(I_i+\widehat p_{ij}>\gamma T\). \textsc{ScaledGreedy} accepts this pair and then closes machine \(i\), so its incoming load may exceed \(\gamma T\). In contrast, \textsc{EdgeSkip} rejects only \((i,j)\) and continues scanning the remaining pairs. The job \(j\) stays on its reference machine unless a later pair moves it elsewhere, and machine \(i\) remains available for later jobs. As a result, \textsc{EdgeSkip} maintains \(I_i\le\gamma T\) throughout the construction, which yields the improved consistency bound.

However, this change makes the robustness analysis less direct. A failed capacity test implies only \(I_i+\widehat p_{ij}>\gamma T\),
not \(I_i\ge\gamma T\): the rejected job itself may account for most of the excess. Thus the accepted incoming load on \(i\) cannot be
 charged by itself. The robustness proof below instead accounts for the accepted jobs together with the rejected job and uses a dependency forest to control repetitions across different rejected pairs.

\begin{theorem}\label{thm:edge-skip}
For every \(\kappa\)-approximate reference schedule and every \(C>\kappa\), run \textnormal{\textsc{EdgeSkip}} with
\(\gamma=(C-\kappa)/\kappa\).
The resulting mechanism is strategyproof, \(C\)-consistent, and \(R_{\kappa}(C)\)-robust, where
\begin{align}\label{eq:robustness-function}
    R_{\kappa}(C)
    =
    \begin{cases}
        \displaystyle \max\left\{n,n-1+\frac{\kappa n-1}{C-\kappa}\right\},
        &\kappa<C\le\kappa+1,\\[2mm]
        \displaystyle \max\left\{n,n-2+\frac{\kappa n}{C-\kappa}\right\},
        &\kappa+1<C\le2\kappa,\\[2mm]
        \displaystyle \max\left\{n,n-1+\frac{\kappa(n-1)}{C-\kappa}\right\},
        &C>2\kappa.
    \end{cases}
\end{align}
\end{theorem}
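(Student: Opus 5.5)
Strategyproofness needs no new work. The weights \(r^0_{ij}\) depend only on \(\widehat{\mathbf p}\), \(\widehat\sigma\) and \(\gamma\), and the tie-breaking favors the report-independent machine \(y(j)\). So \textsc{EdgeSkip} is a job-wise weighted mechanism, and the threshold payments from the preliminaries make it DSIC.

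\textbf{Consistency.} When \(\mathbf p=\widehat{\mathbf p}\), every job goes to its target \(y(j)\), since \(r^0_{ij}\widehat p_{ij}\ge\widehat p_{y(j),j}\) with ties broken toward \(y(j)\). Split the load of machine \(i\) into jobs with \(y(j)=\widehat\sigma(j)=i\) and jobs moved into \(i\). The first part is at most \(T\le\kappa\OPT(\widehat{\mathbf p})\). The second part is \(I_i\le\gamma T\), because every accepted move passed the capacity test. The total is \((1+\gamma)\kappa\OPT=C\cdot\OPT\).

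\textbf{Robustness: reduction.} Fix true reports \(\mathbf p\) and let \(\mathbf x^*\) be an optimal schedule, with \(\OPT=\OPT(\mathbf p)\).

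1. For a job \(j\) assigned to machine \(i\), compare with its optimal machine \(i^*=\mathbf x^*(j)\). Since \(r^0_{ij}p_{ij}\le r^0_{i^*j}p_{i^*j}\),
\[
p_{ij}\le \frac{r^0_{i^*j}}{r^0_{ij}}\,p_{i^*j}=\frac{\max\{\widehat p_{i^*j},\widehat p_{y(j)j}\}}{\widehat p_{i^*j}}\cdot\frac{\widehat p_{ij}}{\max\{\widehat p_{ij},\widehat p_{y(j)j}\}}\,p_{i^*j}.
\]
2. Jobs with \(i=i^*\) contribute at most \(\OPT\) to machine \(i\).
3. For the other jobs, the ratio is at most \(\widehat p_{y(j)j}/\widehat p_{i^*j}\) when \(\widehat p_{i^*j}<\widehat p_{y(j)j}\), and at most \(1\) otherwise.
4. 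If \(r^0_{ij}>1\), machine \(i\) is predicted faster than \(y(j)\) for job \(j\); if \(r^0_{ij}=1\), the cross term is controlled directly.
5. Grouping the remaining jobs by their optimal machine \(k\neq i\), each group contributes at most \(\OPT\) times the largest speedup factor that pair \((k,j)\) could have had.

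So the load on \(i\) is bounded by \(\OPT\) times the weight of a partial matching between the other machines and "speedup factors". The trivial bound gives \(n\cdot\OPT\) whenever \(k\) received no unexploited speedup.

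\textbf{Robustness: charging the large factors.} The factors exceed \(1\) only for pairs \((k,j)\in E\) that were skipped. A pair is skipped either because \(j\) had already moved, to a machine with at least as large a speedup (so the factor transfers to \(y(j)\)), or because \((k,j)\) failed the capacity test, \(I_k+\widehat p_{kj}>\gamma T\).

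In the capacity-failure case, every job already moved into \(k\) has speedup at least that of \((k,j)\). Summing over the accepted load plus the rejected job gives
\[
\text{speedup}\cdot\gamma T<\textstyle\sum\widehat p_{\widehat\sigma(j')j'}+\widehat p_{\widehat\sigma(j)j}\le \kappa n\,\OPT(\widehat{\mathbf p})\text{-type quantity}.
\]
This has to be converted to the true scale. The conversion uses that those jobs sit on their reference machines or were moved, and optimal true loads bound \(p_{kj'}\) for the jobs the mechanism actually placed. Summing these bounds over the machines \(k\) gives \(\OPT\cdot\bigl(n-1+\tfrac{\kappa(n-1)}{C-\kappa}\bigr)\)-type totals.

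The three regimes of \(R_\kappa(C)\) come from the case \(\gamma<1\) versus \(\gamma\ge1\): whether a single rejected job \(\widehat p_{kj}\) can alone exceed \(\gamma T\) (since \(\widehat p_{kj}\le T\) when \(j\) is on \(k\) in the reference), which costs an extra \(-1\) or \(+1\) term. I will compute each case separately.

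\textbf{Main obstacle.} The hardest step is avoiding double counting when several rejected pairs charge the same accepted jobs or reference loads. Following the description preceding Theorem~\ref{thm:edge-skip}, I plan to build a dependency forest: each rejected pair \((k,j)\) points to the machine whose incoming load blocked it, or to the pair that moved \(j\). Processing the forest from the leaves, each unit of reference load is charged at most once per output machine. I expect the difficulty to lie in verifying acyclicity under the scan order, and in checking that the per-root accounting yields exactly the piecewise constants.
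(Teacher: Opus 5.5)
Your strategyproofness and consistency arguments are correct and match the paper's. So does your reduction of an output machine \(h\)'s load to \(\OPT(\mathbf p)\) times the weight of a partial matching \(Q\). The robustness charging, however, has genuine gaps.

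\emph{No conversion to the true scale is possible.} Predictions are arbitrary, so nothing links \(\widehat p\) to \(p\); true times enter only through the matching reduction. What you need is a scale-free bound. Take a heavy edge \(e=(i_e,j_e)\in Q\), meaning \(w_e=r^0_{i_e j_e}/r^0_{h j_e}>1\). Let \(A_e\) be the jobs already moved into \(i_e\) when \(e\) was scanned, and let \(a_e=\widehat p_{\widehat\sigma(j_e)j_e}/\widehat p_{i_e j_e}\ge w_e\). The capacity rejection and the scan order give \(W_e:=\sum_{\ell\in\{j_e\}\cup A_e}\widehat p_{\widehat\sigma(\ell)\ell}>a_e\gamma T\). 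Comparing with \(\sum_\ell\widehat p_{\widehat\sigma(\ell)\ell}\le nT\) makes \(T\) cancel and yields a dimensionless budget. Using \(\kappa n\OPT(\widehat{\mathbf p})\) instead, as you suggest, would lose a factor \(\kappa\). Your ``already moved'' case is vacuous: \(r^0_{i_e j_e}>1\) means \(e\) is strictly faster than the move \((y(j_e),j_e)\). So \(e\) was scanned while \(j_e\) was still available, and it must have failed the capacity test.

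\emph{The missing idea is how to handle overlapping charges.} The job \(j_e\) of one heavy edge can lie in the set \(A_{e'}\) that blocked another heavy edge \(e'\), so summing per-edge charges double counts. The paper's forest has an arc \(e\to e'\) exactly when \(j_e\in A_{e'}\). It then proceeds as follows.
\begin{itemize}
\item Let \(P_e\) be the product of weights along the path from \(e\) to its root, and show \(P_e\le a_e\) by induction along the scan order.
\item Charge only the leaves, whose sets \(\{j_e\}\cup A_e\) are pairwise disjoint; this gives \(\sum_{e\in\mathcal L}P_e<n/\gamma\).
\item Return to the original weights via \(\sum_v(w_v-1)\le\sum_{\text{leaves}}(P_e-1)\).
\end{itemize}
Your description (``points to the machine whose incoming load blocked it, or to the pair that moved \(j\)'') does not pin down this graph. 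More importantly, it has no multiplicative aggregation along paths, and that aggregation is what prevents double counting.

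\emph{Your account of the three regimes is also incorrect.} The breakpoints are \(\gamma=1/\kappa\) and \(\gamma=1\), i.e.\ \(C=\kappa+1\) and \(C=2\kappa\), not a single threshold at \(\gamma=1\). They are unrelated to one rejected job exceeding \(\gamma T\). Indeed \((k,j)\in E\) forces \(\widehat\sigma(j)\ne k\), so your parenthetical case never occurs. The regimes come from two cases of the master bound \(w(Q)\le n-|\mathcal L|+\sum_{e\in\mathcal L}P_e\).
\begin{itemize}
\item With one leaf, every job charged to it has a pair into \(i_e\) in \(E\), hence a reference machine other than \(i_e\). So \(W_e\le(n-1)T\) and \(w(Q)<n-1+(n-1)/\gamma\).
\item With at least two leaves, you need the refined budget \(\sum_{e\in\mathcal L}P_e\le n/\gamma-\max\{0,1/(\kappa\gamma)-1\}\). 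It comes from moving each leaf's charged set to \(i_e\), which is a feasible schedule, and using that its predicted makespan is at least \(\OPT(\widehat{\mathbf p})\ge T/\kappa\).
\end{itemize}
Without the refined budget, for \(C\le\kappa+1\) you only get \(n-2+\kappa n/(C-\kappa)\). This exceeds the claimed \(n-1+(\kappa n-1)/(C-\kappa)\) by \(1/(C-\kappa)-1\ge0\). In particular, for \(\kappa=1\) you would miss the tight value \((n-1)C/(C-1)\) on \(1<C<2\).
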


The two relevant reference schedules are an optimal schedule (\(\kappa=1\)) and the best polynomial-time schedule used in our results, namely the standard \(2\)-approximate schedule (\(\kappa=2\)).

\begin{corollary}\label{cor:edge-skip-specializations}
With \(\kappa=1\) and \(\gamma=C-1\), \textnormal{\textsc{EdgeSkip}} is \(C\)-consistent and
\(\max\{n,\frac{(n-1)C}{C-1}\}\)-robust for every \(C>1\).
With \(\kappa=2\) and \(\gamma=1\), it is polynomial-time, \(4\)-consistent, and \((2n-2)\)-robust.
\end{corollary}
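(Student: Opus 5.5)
The corollary follows by substituting the two parameter choices into Theorem~\ref{thm:edge-skip} and simplifying the piecewise function \(R_\kappa(C)\) in \eqref{eq:robustness-function}. Strategyproofness and \(C\)-consistency come directly from the theorem in both cases. The only real work is identifying which branch of \(R_\kappa(C)\) is active and doing the algebra. For the second claim we also need a short running-time argument.

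\textbf{Case \(\kappa=1\).} The theorem prescribes \(\gamma=(C-1)/1=C-1\). The middle branch \(\kappa+1<C\le 2\kappa\) reads \(2<C\le 2\), so it is empty. Only the first branch (\(1<C\le2\)) and the third branch (\(C>2\)) remain. With \(\kappa=1\) these become
\begin{align*}
n-1+\frac{n-1}{C-1}
\qquad\text{and}\qquad
n-1+\frac{n-1}{C-1},
\end{align*}
so the two branches coincide. In both,
\begin{align*}
n-1+\frac{n-1}{C-1}=(n-1)\Bigl(1+\frac{1}{C-1}\Bigr)=\frac{(n-1)C}{C-1}.
\end{align*}
Taking the maximum with \(n\) gives the claimed robustness \(\max\{n,(n-1)C/(C-1)\}\) for every \(C>1\).

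\textbf{Case \(\kappa=2\), \(C=4\).} Here \(\gamma=(4-2)/2=1\). Since \(\kappa+1=3<4\le4=2\kappa\), the middle branch applies. It gives
\begin{align*}
\max\Bigl\{n,\;n-2+\frac{2n}{2}\Bigr\}=\max\{n,2n-2\}=2n-2
\end{align*}
whenever \(n\ge2\). The case \(n=1\) is trivial, since every mechanism then outputs the optimal allocation.

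\textbf{Running time for \(\kappa=2\).} The reference schedule comes from the classical LP-rounding \(2\)-approximation \cite{lenstra1990approximation}, which runs in polynomial time. Given that schedule, \textsc{EdgeSkip} does the following, each step polynomial in \(n\) and \(m\):
\begin{itemize}
\item it computes \(T\);
\item it sorts the at most \(nm\) pairs of \(E\);
\item it scans them once with constant work per pair;
\item it forms \(\mathbf r^0\);
\item it takes a minimum over \(n\) weighted scores for each job.
\end{itemize}
The threshold payments are also computable in polynomial time: for each job, the threshold is obtained from the other machines' weighted scores.

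There is no genuine obstacle here. The only points needing care are noticing that the middle branch of \(R_1\) is empty, and checking that \(C=4\) falls in the closed endpoint \(C\le2\kappa\) of the middle branch of \(R_2\).
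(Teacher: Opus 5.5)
Your proposal is correct and follows the paper's implicit argument. The corollary is a direct substitution into Theorem~\ref{thm:edge-skip}: the middle branch of \(R_1\) is empty, \(C=4\) sits at the closed right endpoint of the middle branch of \(R_2\), and the standard polynomial-time reference schedule gives the running time. One small correction: for \(n=1\) the value \(2n-2=0\) cannot be a robustness bound, since every ratio is \(1\). So rather than calling that case trivial, say the second claim is meant for \(n\ge2\), where \(\max\{n,2n-2\}=2n-2\).
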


The \(\kappa=1\) guarantee matches the information-theoretic lower bound in Section~\ref{sec:weighted-lower-bound} for every \(C>1\), and therefore gives the exact consistency--robustness frontier for job-wise weighted mechanisms.

The \(\kappa=2\) guarantee is our main polynomial-time point \((C,R)=(4,2n-2)\), which improves the \((6,2n)\) guarantee of \textsc{ScaledGreedy}~\cite{balkanski2023strategyproof}. At the same consistency level, it also improves the \((4,2.5n)\) consistency--robustness guarantee of Cole, Gupta, and Jangir (CGJ)~\cite{cole2025parsimonious}. This comparison concerns guarantees under different prediction interfaces: the CGJ mechanism is polynomial-time once its predicted value of \(\OPT\), machine-specific dual values, and one predicted machine per job are supplied, whereas \textsc{EdgeSkip} computes its \(\kappa=2\) reference schedule in polynomial time from the full predicted processing-time matrix.

\subsection{Analysis of EdgeSkip}
We first establish the two simpler parts of Theorem~\ref{thm:edge-skip}.
Their proofs are short because all weights are fixed before the machines submit their reports and because the capacity test directly bounds the target schedule.

\begin{lemma}\label{lem:edge-skip-basic}
\textnormal{\textsc{EdgeSkip}} is strategyproof and \((1+\gamma)\kappa\)-consistent.
\end{lemma}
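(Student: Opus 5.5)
Strategyproofness will follow from the fact that \textsc{EdgeSkip} is a job-wise weighted mechanism. The whole target-construction phase (the reference schedule \(\widehat\sigma\), the value \(T\), the scan over \(E\), the targets \(y(j)\), and the weights \(r^0_{ij}\)) uses only the prediction \(\widehat{\mathbf p}\) and the public parameter \(\gamma\). It never reads the reports \(\mathbf p\). Hence, for a fixed prediction, the allocation is \(M_{\mathbf r^0}\) with a positive, report-independent weight matrix and a fixed tie-breaking rule (priority to \(y(j)\), then a fixed order). As noted in the preliminaries, each job then has a threshold payment for each machine. Summing these thresholds over jobs gives DSIC payments. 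I will also check weak monotonicity directly: the allocation of job \(j\) to machine \(i\) is monotone nonincreasing in \(p_{ij}\) and does not depend on \(p_{ij'}\) for \(j'\neq j\). The sum in the weak-monotonicity condition therefore splits into nonpositive per-job terms.

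For consistency, set \(\mathbf p=\widehat{\mathbf p}\). First I show that each job \(j\) is assigned to \(y(j)\). By construction \(r^0_{ij}\widehat p_{ij}=\max\{\widehat p_{ij},\widehat p_{y(j),j}\}\ge \widehat p_{y(j),j}=r^0_{y(j),j}\widehat p_{y(j),j}\), because \(r^0_{y(j),j}=1\). So \(y(j)\) attains the minimum weighted score, and the tie-breaking rule selects it. The realized allocation is therefore exactly the target schedule \(y\).

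It remains to bound the predicted load of every machine \(i\) under \(y\). The jobs on \(i\) fall into two groups. The first group contains jobs with \(y(j)=\widehat\sigma(j)=i\) that never moved. These are a subset of \(i\)'s reference jobs, so their total predicted load is at most \(T\). The second group contains jobs moved into \(i\). Each such move was accepted only after passing the capacity test \(I_i+\widehat p_{ij}\le\gamma T\), and \(I_i\) records exactly the total load moved into \(i\). So the second group contributes at most \(\gamma T\). Each job moves at most once, because moves require \(y(j)=\widehat\sigma(j)\) and a move sets \(y(j)=i\neq\widehat\sigma(j)\), given that \(\widehat p_{ij}<\widehat p_{\widehat\sigma(j),j}\). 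Hence \(I_i\) never decreases and the two groups are disjoint.

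Combining the two groups, the makespan is at most \((1+\gamma)T\le(1+\gamma)\kappa\,\OPT(\widehat{\mathbf p})\). I expect no real obstacle here. The only points needing care are verifying that the tie-breaking favours \(y(j)\) on correct predictions, and that jobs moved out of \(i\) only reduce \(i\)'s load. Substituting \(\gamma=(C-\kappa)/\kappa\) then gives \(C\)-consistency.
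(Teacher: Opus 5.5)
Your proof is correct and follows the paper's argument essentially step for step: report-independent weights give per-job monotonicity and hence weak monotonicity, and on correct predictions the target-priority tie-breaking realizes \(y\), whose load is at most \(T\) from unmoved reference jobs plus \(I_i\le\gamma T\) from moved-in jobs. Your extra check that each job moves at most once, so the two groups are disjoint and all of \(I_i\) stays on \(i\), is a detail the paper leaves implicit.
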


\begin{proof}
The weights \(\mathbf r^0\) depend only on the prediction, the reference schedule, and \(\gamma\), not on the machines' reports.
Fix the reports of all machines other than \(i\), and consider two reports \(\mathbf p_i\) and \(\mathbf p'_i\) by machine \(i\), with corresponding allocation vectors \(\mathbf x_i\) and \(\mathbf x'_i\).
For every job \(j\), if the two reports differ on \(p_{ij}\), the lower reported processing time can only cause machine \(i\) to gain the job, while the higher one can only cause it to lose the job.
Hence
\(
    (p_{ij}-p'_{ij})(x_{ij}-x'_{ij})\le 0.
\)
Summing over all jobs gives weak monotonicity, and therefore the allocation rule admits strategyproof payments.

Suppose now that \(\mathbf p=\widehat{\mathbf p}\).
The target machine \(y(j)\) has weight \(r^0_{y(j),j}=1\), so its weighted score is \(\widehat p_{y(j),j}\).
For every machine \(i\), we have \(r^0_{ij}\widehat p_{ij}=\max\{\widehat p_{ij},\widehat p_{y(j),j}\}\ge\widehat p_{y(j),j}\).
The target-priority tie-breaking rule therefore assigns every job \(j\) to \(y(j)\).

Consider the load of a machine \(i\) in this target schedule.
The jobs that remain from its reference schedule have total processing time at most \(T\).
Every other job assigned to \(i\) was counted in \(I_i\), and the capacity test maintains \(I_i\le\gamma T\).
The target load of \(i\) is consequently at most \((1+\gamma)T\).
Since \(\mathbf p=\widehat{\mathbf p}\) and \(T\le\kappa\OPT(\widehat{\mathbf p})\), this load is at most \((1+\gamma)\kappa\OPT(\mathbf p)\), proving the claimed consistency.
\end{proof}

The consistency of \textsc{ScaledGreedy} \cite{balkanski2023strategyproof} is \((2+\gamma)\kappa\).
Before \textsc{ScaledGreedy} accepts a pair \((i,j)\) that fails \textsc{EdgeSkip}'s capacity test, machine \(i\)'s incoming load is at most \(\gamma T\), and job \(j\) has predicted processing time less than \(\widehat p_{\widehat\sigma(j),j}\le T\).
Thus its incoming load can reach \((1+\gamma)T\), and adding the remaining reference load gives a target load of at most \((2+\gamma)T\).
In contrast, \textsc{EdgeSkip} keeps the incoming load at most \(\gamma T\), so its target load is at most \((1+\gamma)T\).
Thus the improvement in consistency is not caused by a different weighted allocation rule.
It comes entirely from skipping a pair that would exceed the incoming-load limit while constructing the target schedule.
For the standard \(2\)-approximate reference schedule and \(\gamma=1\), this changes the consistency factor from \(6\) to \(4\).

Next, we prove the robustness guarantee in Theorem~\ref{thm:edge-skip} through a partial-matching bound on the load of each machine in the output allocation.
Fix arbitrary reports \(\mathbf p\), let \(\mathbf x\) be the resulting allocation, and let \(\mathbf x^*\) be an optimal allocation for \(\mathbf p\).
Fix an arbitrary machine \(h\in N\), and set
\[
    J_i=\{j:x_{hj}=1,\ x^*_{ij}=1\}.
\]
Thus, \(J_i\) consists exactly of the jobs assigned to \(h\) by the mechanism and to \(i\) by the optimal allocation.
For every nonempty \(J_i\), choose \(j_i\in J_i\) maximizing \(r^0_{ij}/r^0_{hj}\).
The weighted allocation rule gives
\[
\begin{aligned}
    \sum_{j\in J_i}p_{hj}
    &\le
    \left(\max_{j\in J_i}\frac{r^0_{ij}}{r^0_{hj}}\right)
    \left(\sum_{j\in J_i}p_{ij}\right)\\
    &\le
    \frac{r^0_{ij_i}}{r^0_{hj_i}}\OPT(\mathbf p).
\end{aligned}
\]
The pairs \((i,j_i)\) have distinct machines and distinct jobs, so they form a partial matching.
Consequently, it is enough to bound the weight
\[
    w(Q)=\sum_{(i,j)\in Q}\frac{r^0_{ij}}{r^0_{hj}}
\]
of every partial matching \(Q\) between machines and jobs, for every choice of \(h\).

\begin{lemma}[Excess-weight forest bound]\label{lem:edge-skip-forest-bound}
Fix a machine \(h\) and a partial matching \(Q\). For each edge \(e=(i,j)\in Q\), define its weight by
\(
    w_e:=\frac{r^0_{ij}}{r^0_{hj}}.
\)
Suppose \(w_e>1\) for some \(e\in Q\).
There exist a nonempty set \(\mathcal L\subseteq\{e\in Q:w_e>1\}\) and values \(P_e\ge1\), \(e\in\mathcal L\), such that
\begin{align}
    \sum_{\substack{e\in Q:w_e>1}}(w_e-1)
    &\le \sum_{e\in\mathcal L}(P_e-1),
    \label{eq:edge-skip-excess}\\
    \sum_{e\in\mathcal L}P_e
    &<\frac n\gamma,
    \label{eq:edge-skip-basic-budget}\\
    \sum_{e\in\mathcal L}P_e
    &\le\frac n\gamma-
    \max\left\{0,\frac1{\kappa\gamma}-1\right\},
    \label{eq:edge-skip-refined-budget}\\
    \mathcal L=\{e\}
    \quad &\Longrightarrow\quad
    P_e<\frac{n-1}{\gamma}.
    \label{eq:edge-skip-one-leaf}
\end{align}
\end{lemma}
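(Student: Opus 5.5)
The plan is to turn each edge with \(w_e>1\) into a rejected capacity test and to charge that rejection to predicted reference load, which totals at most \(nT\). First I identify which edges have \(w_e>1\). Since \(r^0_{hj}\ge1\), \(w_e>1\) forces \(r^0_{ij}>1\), that is \(\widehat p_{ij}<\widehat p_{y(j),j}\le\widehat p_{\widehat\sigma(j),j}\), and then \(w_e\le\widehat p_{y(j),j}/\widehat p_{ij}\le s_e\), where \(s_e:=\widehat p_{\widehat\sigma(j),j}/\widehat p_{ij}\) is the speedup of the pair. Thus \((i,j)\in E\). Moreover, \((i,j)\) was scanned while \(j\) was still unmoved: either \(j\) never moved, or \(j\) moved to \(y(j)\), whose speedup is strictly smaller than \(s_e\). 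Hence \((i,j)\) was rejected because the capacity test failed at its scan time \(t_e\), giving \(I_i(t_e)+\widehat p_{ij}>\gamma T\).

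Every job \(j'\) counted in \(I_i(t_e)\) was accepted earlier, so its speedup is at least \(s_e\), i.e. \(\widehat p_{ij'}\le\widehat p_{\widehat\sigma(j'),j'}/s_e\). Multiplying the failed test by \(s_e\) gives
\[
s_e\gamma T<\widehat p_{\widehat\sigma(j),j}+\sum_{j'\in A_e}\widehat p_{\widehat\sigma(j'),j'},
\]
where \(A_e\) is the set of jobs accepted into \(i\) before \(t_e\). So I would set \(P_e\) to be this right-hand side divided by \(\gamma T\), which gives \(P_e>s_e\ge w_e\) (and \(P_e\ge1\)), and call \(D_e=\{j\}\cup A_e\) the set of jobs \(e\) charges.

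All of these jobs have reference machine different from \(i\): accepted jobs came from elsewhere, and \(\widehat p_{ij}<\widehat p_{\widehat\sigma(j),j}\). Hence a single edge charges reference load of at most \(n-1\) machines, so \(P_e<(n-1)T/(\gamma T)=(n-1)/\gamma\). This gives \eqref{eq:edge-skip-one-leaf}, provided a singleton \(\mathcal L\) carries such a \(P_e\).

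The main obstacle is double counting across different edges of \(Q\). The sets \(A_e\) for distinct edges lie on distinct machines (since \(Q\) is a matching) and each job moves at most once, so these sets are pairwise disjoint. The rejected job \(j\) of \(e\) can, however, have been moved later to a machine \(i'\) with \((i',j')\in Q\), and then lie in \(A_{e'}\). I would encode this as a dependency forest: \(e\to e'\) whenever the rejected job of \(e\) lies in \(D_{e'}\). Because \(j\) is a matching vertex, each edge has at most one parent, and the scan order (\(t_e<t_{e'}\)) rules out cycles. I then let \(\mathcal L\) be the roots, merge each tree into its root, and redefine \(P_{\text{root}}\) as the normalized reference load of the union of \(D\) over the tree.

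To keep \eqref{eq:edge-skip-excess} valid, I need \(\sum_{e\in\text{tree}}(w_e-1)\le P_{\text{root}}-1\). The extra factor saved at a child is exactly what pays for it. If \(j\) moved to \(i'\), then \(w_e\le\widehat p_{i'j}/\widehat p_{ij}\), while \(j\) contributes at least \(s_{e'}\widehat p_{i'j}\) to the bound for \(e'\). I would check a two-level telescoping inequality of the form \((a-1)+(b-1)\le ab-1\) for \(a,b\ge1\) and induct on the tree. This is the step I expect to need the most care.

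With disjointness restored, \(\sum_{e\in\mathcal L}P_e\) is at most the total reference load divided by \(\gamma T\), which is at most \(n/\gamma\). Strictness comes from the strict failed test, giving \eqref{eq:edge-skip-basic-budget}.

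For \eqref{eq:edge-skip-refined-budget}, I would first try to exploit the slack of the term \(\widehat p_{ij}\) in the failed test. Charging the rejected job by its reference time wastes at least \(\widehat p_{\widehat\sigma(j),j}-\widehat p_{ij}\). I would combine this with \(T\le\kappa\OPT(\widehat{\mathbf p})\) and the fact that some machine's reference load, or a job's predicted time, is tied to \(T/\kappa\). The aim is to show that at least \((1/\kappa-\gamma)T\) of the \(nT\) budget is never charged when \(\kappa\gamma<1\).

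If that fails, the fallback is to note that the root's own machine \(i\) contributes none of its reference load. Its load is at least \(T/\kappa\) whenever \(i\) realizes the makespan, and otherwise I would do a case split on whether the unused machines carry load at least \(T/\kappa-\gamma T\).
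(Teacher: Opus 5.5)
You have the rejected pairs, the sets $A_e$ and the dependency forest exactly as in the paper. The proof breaks at the next step, where you merge each tree into its root and set $P_{\mathrm{root}}$ equal to the normalized reference load of $\bigcup_e D_e$.

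\textbf{Where the merged choice fails.} With this choice $\mathcal L$ is a singleton whenever the forest is one tree. That tree may have several leaves, and the union of their charged sets can contain reference load from all $n$ machines, including the root's own machine: a job moved into a child's machine may come from there. So your single-edge bound $W\le(n-1)T$ does not transfer, and \eqref{eq:edge-skip-one-leaf} fails. A concrete instance:
\begin{itemize}
\item Take $n=3$, $\gamma=1/2$, and jobs $u,v,w$ with reference machines $1,2,3$ and reference times $1$, so $T=1$.
\item Set $\widehat p_{1v}=0.2$, $\widehat p_{1w}=0.4$, $\widehat p_{2w}=0.45$, $\widehat p_{2u}=0.5$, with the remaining entries huge, and take $h=3$, $Q=\{(1,w),(2,u)\}$.
\item The scan accepts $(1,v)$, rejects $(1,w)$, accepts $(2,w)$ and rejects $(2,u)$, so the forest is $(1,w)\to(2,u)$.
\item Your singleton $\mathcal L=\{(2,u)\}$ gets $P=3/\gamma=6$. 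This violates both $P<(n-1)/\gamma=4$ and the strict bound \eqref{eq:edge-skip-basic-budget}.
\end{itemize}

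\textbf{The other two budgets.} Because your $P$ equals the charged load exactly, the strict failed test only gives the lower bound $P>s_e$; it never gives strictness in \eqref{eq:edge-skip-basic-budget}. Your $P$ also absorbs all the slack that \eqref{eq:edge-skip-refined-budget} needs. For example, take $n=3$, $\gamma=1/2$, the same three jobs, $\widehat p_{1w}=0.3$, $\widehat p_{1v}=0.7$, $\widehat p_{2u}=0.6$, the rest huge, $h=3$ and $Q=\{(1,v),(2,u)\}$. The reference schedule is optimal ($\kappa=1$), the forest has two isolated vertices, and your $P$ values sum to $6$, whereas \eqref{eq:edge-skip-refined-budget} demands at most $5$. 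Your merged version of \eqref{eq:edge-skip-excess} does hold, because the merged load dominates the leaf products below. The problem lies in the other three inequalities.

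\textbf{The repair.} Take $\mathcal L$ to be the leaves (indegree zero), not the roots. Let $P_e$ be the product of the weights on the path from $e$ to its root. Your ``extra factor saved at a child'' becomes a backward induction $P_e\le a_e$: when $e=(i,j)$ has parent $e'$, one uses $\widehat p_{\widehat\sigma(j),j}/\widehat p_{y(j),j}\ge a_{e'}$. Then:
\begin{itemize}
\item Only the leaf sets $C_e=\{j_e\}\cup A_e$ are charged, and these are pairwise disjoint.
\item $W_e\ge P_e x_e>P_e\gamma T$ gives the strict budget \eqref{eq:edge-skip-basic-budget}.
\item With a single leaf only one $C_e$ is charged, and it avoids the reference load of $i_e$ by your single-edge argument, which gives \eqref{eq:edge-skip-one-leaf}.
\item \eqref{eq:edge-skip-excess} follows from the tree inequality $\sum_v(w_v-1)\le\sum_{\text{leaves}}(P_e-1)$; each induction step reduces to $(w-1)(\sum P'_e-1)\ge 0$.
\end{itemize}

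\textbf{The missing idea for the refined budget.} Your proposal has no working argument for \eqref{eq:edge-skip-refined-budget}. Reassign each leaf set $C_e$ to $i_e$ and leave every other job on its reference machine. This is a feasible schedule for $\widehat{\mathbf p}$, so some machine has load at least $T/\kappa$. That load is either an uncharged remainder $b_i$, or $b_{i_e}+\gamma T+z_e$ with $z_e=x_e-\gamma T>0$. Since $W_e\ge P_e\gamma T+P_e z_e$ and $P_e\ge1$, the uncharged load plus this surplus is at least $(1/\kappa-\gamma)T$, which is exactly the subtracted term.
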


\begin{proof}
Consider \(e=(i,j)\in Q\) with \(w_e>1\).
Since \(r^0_{hj}\ge1\), we have \(r^0_{ij}>1\), and hence
\[
    r^0_{ij}
    =\frac{\widehat p_{y(j),j}}{\widehat p_{ij}},
    \qquad
    \widehat p_{ij}
    <\widehat p_{y(j),j}
    \le\widehat p_{\widehat\sigma(j),j}.
\]
Hence, \((i,j)\in E\) and was therefore considered by \textsc{EdgeSkip}.
If the target-schedule construction moved \(j\) from \(\widehat\sigma(j)\) to its final target \(y(j)\), then \((y(j),j)\) is the pair that performed this move.
Because \(\widehat p_{ij}<\widehat p_{y(j),j}\), the pair \((i,j)\) has strictly larger predicted speedup than \((y(j),j)\) and was therefore processed first, while \(j\) was still available.
If the construction never moved \(j\), then \(j\) likewise remained available when \((i,j)\) was processed.
In either case, \((i,j)\) was processed while \(j\) was available but did not move \(j\) to \(i\); it must therefore have been rejected by the capacity test.

Let \(A_e\) be the set of jobs \(\ell\) for which the pair \((i,\ell)\) was accepted before \((i,j)\) was processed.
Equivalently, these are precisely the jobs that had already been moved from their reference machines into \(i\) when \((i,j)\) failed the capacity test.
Thus, immediately before this test,
\(
    I_i=\sum_{\ell\in A_e}\widehat p_{i\ell},
\)
and \(j\notin A_e\).
Put
\[
    a_e=\frac{\widehat p_{\widehat\sigma(j),j}}{\widehat p_{ij}}.
\]
The rejection and the global processing order give
\begin{align}
    \widehat p_{ij}+\sum_{\ell\in A_e}\widehat p_{i\ell}
    &>\gamma T,
    \label{eq:edge-skip-rejection}\\
    \frac{\widehat p_{\widehat\sigma(\ell),\ell}}
         {\widehat p_{i\ell}}
    &\ge a_e
    \qquad(\ell\in A_e),
    \label{eq:edge-skip-order}
\end{align}
while
\[
    w_e
    \le
    \frac{\widehat p_{y(j),j}}{\widehat p_{ij}}
    \le a_e.
\]

Make one vertex for every edge \(e\in Q\) with \(w_e>1\), and draw an arc
\(e\to e'\) when \(j\in A_{e'}\).
Each vertex has at most one outgoing arc because a job moves at most once and the machine endpoints of the edges in \(Q\) are distinct.
Every arc follows the processing order, so there is no directed cycle.
Moreover, an undirected cycle together with the outdegree-one condition would force a directed cycle.
The resulting graph is therefore a forest oriented toward its roots.
Let \(\mathcal L\) be its set of leaves, where a leaf has indegree zero, and let \(P_e\) be the product of the vertex weights on the path from \(e\) to its root.

A backward induction gives \(P_e\le a_e\) for every vertex.
The claim holds at a root because \(w_e\le a_e\).
If \(e=(i,j)\) has parent \(e'\), then \(j\in A_{e'}\), so \(j\) was moved to \(y(j)\) before \(e'\) was rejected.
The processing order gives
\[
    \frac{\widehat p_{\widehat\sigma(j),j}}
         {\widehat p_{y(j),j}}
    \ge a_{e'}.
\]
Assuming \(P_{e'}\le a_{e'}\), we obtain
\[
\begin{aligned}
    P_e
    &=w_e P_{e'}\\
    &\le
    \frac{\widehat p_{y(j),j}}{\widehat p_{ij}}a_{e'}\\
    &\le
    \frac{\widehat p_{y(j),j}}{\widehat p_{ij}}
    \frac{\widehat p_{\widehat\sigma(j),j}}
         {\widehat p_{y(j),j}}
    =a_e.
\end{aligned}
\]

For a leaf \(e=(i_e,j_e)\), define
\[
    C_e=\{j_e\}\cup A_e,\qquad
    x_e=\sum_{j\in C_e}\widehat p_{i_e j},\qquad
    W_e=\sum_{j\in C_e}\widehat p_{\widehat\sigma(j),j}.
\]
Equations~\eqref{eq:edge-skip-rejection}--\eqref{eq:edge-skip-order} and \(P_e\le a_e\) imply
\begin{align}
    x_e>\gamma T,
    \qquad
    W_e\ge P_e x_e.
    \label{eq:edge-skip-leaf-charge}
\end{align}
The sets \(C_e\) for \(e\in\mathcal L\) are pairwise disjoint.
Indeed, the sets \(A_e\) are disjoint for distinct machines \(i_e\), the jobs \(j_e\) are distinct because \(Q\) is a matching, and an inclusion \(j_e\in A_{e'}\) would give the leaf \(e'\) an incoming arc.
Consequently,
\[
    \gamma T\sum_{e\in\mathcal L}P_e
    <\sum_{e\in\mathcal L}W_e
    \le\sum_{j\in M}\widehat p_{\widehat\sigma(j),j}
    \le nT,
\]
which proves~\eqref{eq:edge-skip-basic-budget}.
If there is only one leaf \(e=(i_e,j_e)\), no job in \(C_e\) has reference machine \(i_e\), since every such job corresponds to a strictly faster pair into \(i_e\).
Hence \(W_e\le(n-1)T\), which proves~\eqref{eq:edge-skip-one-leaf}.

For the refined budget, remove all jobs in the sets \(C_e\), \(e\in\mathcal L\), from the reference schedule, and let \(b_i\) be the remaining load of machine \(i\).
Reassign every \(C_e\) to machine \(i_e\), leaving all other jobs on their reference machines.
This is feasible because the sets \(C_e\) are pairwise disjoint and the machines \(i_e\) are distinct.
Its load is \(b_{i_e}+x_e\) on each machine \(i_e\) and \(b_i\) on every other machine.
Since \(T\le\kappa\OPT(\widehat{\mathbf p})\), every feasible schedule has makespan at least \(T/\kappa\). Thus, the reassigned schedule has some machine load of at least \(T/\kappa\).
Writing \(z_e=x_e-\gamma T>0\) and using \(P_e\ge1\), we get
\[
    \sum_i b_i+\sum_{e\in\mathcal L}P_e z_e
    \ge
    \max\left\{0,\left(\frac1\kappa-\gamma\right)T\right\}.
\]
By the definitions of \(W_e\) and \(b_i\), the total predicted processing time in the reference schedule decomposes as
\[
\begin{aligned}
    nT
    &\ge\sum_{j\in M}\widehat p_{\widehat\sigma(j),j}\\
    &=\sum_{e\in\mathcal L}W_e+\sum_i b_i\\
    &\ge
    \gamma T\sum_{e\in\mathcal L}P_e
    +\sum_{e\in\mathcal L}P_e z_e+\sum_i b_i\\
    &\ge
    \gamma T\sum_{e\in\mathcal L}P_e
    +\max\left\{0,\left(\frac1\kappa-\gamma\right)T\right\},
\end{aligned}
\]
which proves~\eqref{eq:edge-skip-refined-budget}.

It remains to prove the excess inequality.
For any rooted tree with vertex weights at least one,
\[
    \sum_v(w_v-1)\le\sum_{e\text{ leaf}}(P_e-1).
\]
For a one-vertex tree this is equality.
Otherwise remove a root of weight \(w\), and let \(P'_e\) be the leaf products in the remaining subtrees.
After applying induction to those subtrees, the required step is
\[
    (w-1)+\sum_e(P'_e-1)
    \le\sum_e(wP'_e-1),
\]
which is equivalent to
\((w-1)(\sum_e P'_e-1)\ge0\).
Summing over the tree components proves~\eqref{eq:edge-skip-excess}.
\end{proof}

\begin{proof}[Proof of the robustness guarantee in Theorem~\ref{thm:edge-skip}]
Fix a machine \(h\) and consider an arbitrary partial matching \(Q\).
If every edge of \(Q\) has weight at most one, then \(w(Q)\le |Q|\le n\).
Otherwise, \(\mathcal L\) is nonempty.
Equation~\eqref{eq:edge-skip-excess} gives
\begin{align}
    w(Q)
    &=\sum_{e\in Q}\min\{w_e,1\}
      +\sum_{\substack{e\in Q:w_e>1}}(w_e-1)\notag\\
    &\le |Q|+\sum_{e\in\mathcal L}(P_e-1)\notag\\
    &\le n-|\mathcal L|+\sum_{e\in\mathcal L}P_e.
    \label{eq:edge-skip-master}
\end{align}
If \(|\mathcal L|=1\), equation~\eqref{eq:edge-skip-one-leaf} gives
\[
    w(Q)<n-1+\frac{n-1}{\gamma}.
\]
If \(|\mathcal L|\ge2\), equations~\eqref{eq:edge-skip-master} and~\eqref{eq:edge-skip-refined-budget} give
\[
    w(Q)
    \le n-2+\frac n\gamma
       -\max\left\{0,\frac1{\kappa\gamma}-1\right\}.
\]
For \(0<\gamma\le1/\kappa\), the bound obtained when \(|\mathcal L|\ge2\) is
\(n-1+(n-1/\kappa)/\gamma\), and it is at least the bound obtained when \(|\mathcal L|=1\).
For \(1/\kappa<\gamma\le1\), the bound obtained when \(|\mathcal L|\ge2\) is \(n-2+n/\gamma\), and it is again at least the bound obtained when \(|\mathcal L|=1\).
For \(\gamma>1\), the bound obtained when \(|\mathcal L|=1\) is the larger one, because it exceeds the bound obtained when \(|\mathcal L|\ge2\) by \(1-1/\gamma\).
Combining these three cases with the initial bound \(w(Q)\le n\) when all edge weights are at most one gives
\[
    w(Q)\le R_{\kappa}((1+\gamma)\kappa)
\]
for every partial matching \(Q\).

Let
\(
    Q_h=\{(i,j_i):J_i\ne\varnothing\}
\)
be the partial matching constructed in the paragraph preceding
Lemma~\ref{lem:edge-skip-forest-bound}.
Recalling that \(L_h(\mathbf p,\mathbf x)=\sum_{j:x_{hj}=1}p_{hj}\) is the load of machine \(h\) under allocation \(\mathbf x\), summing the preceding bounds over \(i\) gives
\[
\begin{aligned}
    L_h(\mathbf p,\mathbf x)
    &=\sum_{i:J_i\ne\varnothing}\sum_{j\in J_i}p_{hj}\\
    &\le
    \sum_{i:J_i\ne\varnothing}
    \frac{r^0_{ij_i}}{r^0_{hj_i}}
    \sum_{j\in J_i}p_{ij}\\
    &\le
    \sum_{(i,j_i)\in Q_h}
    \frac{r^0_{ij_i}}{r^0_{hj_i}}\OPT(\mathbf p)\\
    &=w(Q_h)\OPT(\mathbf p)\\
    &\le R_{\kappa}((1+\gamma)\kappa)\OPT(\mathbf p).
\end{aligned}
\]
Taking the maximum over \(h\) yields the robustness claim.
\end{proof}

\subsection{Error-Dependent Guarantees}\label{sec:error-tolerant}

Fix a tolerance parameter \(\bar\eta\ge1\), chosen in advance.
It specifies the largest multiplicative prediction error for which the mechanism is designed to retain an error-dependent guarantee: this guarantee applies when
\(\eta(\mathbf p,\widehat{\mathbf p})\le\bar\eta\), whereas the unconditional robustness guarantee must hold for arbitrary predictions.
Let \(\mathbf r^0\) and \(y\) be the weights and target schedule produced in \textsc{EdgeSkip}, and define
\begin{align*}
    \widetilde r_{ij}=
    \begin{cases}
        \frac{1}{\bar\eta^2},&i=y(j),\\
        r^0_{ij},&i\ne y(j).
    \end{cases}
\end{align*}
We call the resulting weighted mechanism \textsc{TolerantEdgeSkip}.
It assigns each job \(j\) to a machine minimizing \(\widetilde r_{ij}p_{ij}\), giving \(y(j)\) priority in a tie.
The same modification is used in \textsc{ErrorTolerantScaledGreedy} \cite{balkanski2023strategyproof}; here it is applied to the target schedule of \textsc{EdgeSkip}.

\begin{theorem}\label{thm:error-tolerant}
\textnormal{\textsc{TolerantEdgeSkip}} is strategyproof.
If \(\eta(\mathbf p,\widehat{\mathbf p})\le\bar\eta\), its approximation ratio is at most \((1+\gamma)\kappa\eta^2\).
For arbitrary predictions, its approximation ratio is at most
\(
    1+\bar\eta^2\bigl(R_{\kappa}((1+\gamma)\kappa)-1\bigr).
\)
\end{theorem}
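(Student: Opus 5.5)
Strategyproofness follows exactly as in Lemma~\ref{lem:edge-skip-basic}. The weights \(\widetilde{\mathbf r}\) depend only on \(\widehat{\mathbf p}\), \(\widehat\sigma\), \(\gamma\) and \(\bar\eta\), and not on the reports, and ties are broken by a fixed rule. So the per-job threshold argument gives weak monotonicity and hence DSIC payments.

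For the error-dependent bound, assume \(\eta=\eta(\mathbf p,\widehat{\mathbf p})\le\bar\eta\). I would show that every job \(j\) is assigned to its target \(y(j)\). The target's score is \(\widetilde r_{y(j),j}p_{y(j),j}=p_{y(j),j}/\bar\eta^2\le \eta\widehat p_{y(j),j}/\bar\eta^2\). Any other machine \(i\) has score \(r^0_{ij}p_{ij}\ge r^0_{ij}\widehat p_{ij}/\eta\ge \widehat p_{y(j),j}/\eta\). Since \(\eta\le\bar\eta\) implies \(\eta/\bar\eta^2\le 1/\eta\), the target weakly wins, and the priority rule settles ties. The output is therefore the target schedule \(y\). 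By the proof of Lemma~\ref{lem:edge-skip-basic}, its predicted load on each machine is at most \((1+\gamma)T\le(1+\gamma)\kappa\OPT(\widehat{\mathbf p})\). Converting between predicted and true values costs one factor \(\eta\) on the loads, since \(L_i(\mathbf p,y)\le\eta L_i(\widehat{\mathbf p},y)\). It costs another factor \(\eta\) on the optimum, since \(\OPT(\widehat{\mathbf p})\le\eta\OPT(\mathbf p)\), because the true optimal schedule has predicted makespan at most \(\eta\OPT(\mathbf p)\). Together these give the ratio \((1+\gamma)\kappa\eta^2\).

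For unconditional robustness, I would reuse the partial-matching reduction with weights \(\widetilde w_{(i,j)}=\widetilde r_{ij}/\widetilde r_{hj}\) for a fixed output machine \(h\). The rest of the argument goes through verbatim: each \(J_i\) is charged via \(\sum_{j\in J_i}p_{hj}\le \widetilde w_{(i,j_i)}\OPT(\mathbf p)\). It then suffices to show \(\widetilde w(Q)\le 1+\bar\eta^2(w(Q)-1)\) for every partial matching \(Q\), and to invoke the bound \(w(Q)\le R_\kappa((1+\gamma)\kappa)\) established in the proof of Theorem~\ref{thm:edge-skip}. I would compare \(\widetilde w_e\) with \(w_e=r^0_{ij}/r^0_{hj}\) edge by edge.
\begin{itemize}
\item If neither \(i\) nor \(h\) equals \(y(j)\), the weights agree.
\item If \(i=y(j)\ne h\), then \(\widetilde w_e=\bar\eta^{-2}/r^0_{hj}\le w_e\).
\item If \(h=y(j)\ne i\), then \(\widetilde w_e=\bar\eta^2 r^0_{ij}=\bar\eta^2 w_e\), since \(r^0_{hj}=1\).
\item If \(i=h\), the edge has weight \(1\) in both cases.
\end{itemize}
So \(\widetilde w_e\le\bar\eta^2 w_e\) for every edge. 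Only edges whose job has target \(h\) are inflated, and the other edges satisfy \(\widetilde w_e\le w_e\).

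The main obstacle is turning this into \(1+\bar\eta^2(w(Q)-1)\) rather than the cruder bound \(\bar\eta^2 w(Q)\). The saving of one additive term must come from an edge that is not inflated. My first attempt: if \(Q\) contains an edge \((h,j)\), it has \(\widetilde w=1\), so \(\widetilde w(Q)\le 1+\bar\eta^2(w(Q)-1)\) directly. Otherwise, I would augment \(Q\) by the edge \((h,j')\) for a job \(j'\) of \(J_h\) if one exists. If \(J_h=\varnothing\), the matching uses at most \(n-1\) machines other than \(h\), and the bound of Theorem~\ref{thm:edge-skip} applied to \(Q\) has slack that I would exploit. If that fails, I would fall back on a finer case analysis. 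I would separate the machine \(i\) whose jobs target \(h\) from the others. I would use the excess decomposition \(w(Q)=\sum_e\min\{w_e,1\}+\sum_e(w_e-1)^+\) from the proof of Theorem~\ref{thm:edge-skip}, and show that at least one unit of \(\sum_e\min\{w_e,1\}\) is never scaled by \(\bar\eta^2\).
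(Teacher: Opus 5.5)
Your strategyproofness argument and the error-dependent bound coincide with the paper's. So do your edge-by-edge estimate \(\widetilde w_e\le\bar\eta^2w_e\) and your treatment of the case where \(Q\) contains an edge at \(h\). The gap is in the remaining case.

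First, the sufficient condition you announce, \(\widetilde w(Q)\le1+\bar\eta^2(w(Q)-1)\) for every partial matching \(Q\), is false. Take \(Q=\{(i,j)\}\) with \(y(j)=h\ne i\) and \(\widehat p_{ij}\ge\widehat p_{hj}\). Then \(w(Q)=r^0_{ij}=1\), while \(\widetilde w(Q)=\bar\eta^2\). This already occurs as \(Q_h\) with \(n=2\) and a single job, where the load of \(h\) can equal \(\bar\eta^2\OPT(\mathbf p)\).

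The same example defeats your fallback, because the only unit of \(\sum_e\min\{w_e,1\}\) is exactly the one that gets scaled. Your augmentation by \((h,j')\) with \(j'\in J_h\) is vacuous, since \(Q_h\) already contains \((h,j_h)\) whenever \(J_h\ne\varnothing\). So the case left is precisely \(J_h=\varnothing\), i.e., \(Q\) has no edge at \(h\). There the saved additive term cannot come from any pointwise comparison inside \(Q\).

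Write \(R=R_{\kappa}((1+\gamma)\kappa)\). What you actually need is \(w(Q)\le R-1\) whenever \(Q\) has no edge at \(h\). This does not follow from the black-box bound \(w(Q)\le R\); the ``slack'' you mention must be extracted from inside the proof of Theorem~\ref{thm:edge-skip}. There, \(n\) enters only through \(|Q|\le n\) in \eqref{eq:edge-skip-master}, namely \(w(Q)\le|Q|-|\mathcal L|+\sum_{e\in\mathcal L}P_e\). With \(|Q|\le n-1\) the three cases become:
\begin{itemize}
\item if all weights are at most one, then \(w(Q)\le n-1\le R-1\), because \(R\ge n\);
\item if \(|\mathcal L|=1\), then \(w(Q)<n-2+(n-1)/\gamma\);
\item if \(|\mathcal L|\ge2\), then \(w(Q)\le n-3+n/\gamma-\max\{0,1/(\kappa\gamma)-1\}\).
\end{itemize}
Each bound is exactly one less than the corresponding case bound whose maximum with \(n\) defines \(R\). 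Hence \(\widetilde w(Q)\le\bar\eta^2w(Q)\le\bar\eta^2(R-1)\), which completes the proof as in the paper.

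Your augmentation instinct can be rescued if you augment with any job \(j'\) not covered by \(Q\). Then \(Q\cup\{(h,j')\}\) is a partial matching of weight \(w(Q)+1\le R\), which gives the same conclusion from the black box. However, such a job need not exist when \(m\le n-1\), so the direct argument through \eqref{eq:edge-skip-master} is still needed in general.
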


\begin{proof}
The weights \(\widetilde{\mathbf r}\) depend only on public information, so \textsc{TolerantEdgeSkip} is strategyproof.
Let \(\eta=\eta(\mathbf p,\widehat{\mathbf p})\le\bar\eta\).
Fix a job \(j\).
For the target machine \(y(j)\),
\begin{align*}
    \widetilde r_{y(j),j}p_{y(j),j}
    \le\frac{\eta\widehat p_{y(j),j}}{\bar\eta^2}
    \le\frac{\widehat p_{y(j),j}}{\bar\eta}.
\end{align*}
For any machine \(i\ne y(j)\),
\begin{align*}
    \widetilde r_{ij}p_{ij}
    =r^0_{ij}p_{ij}
    \ge\frac{r^0_{ij}\widehat p_{ij}}{\eta}
    =\frac{\max\{\widehat p_{y(j),j},\widehat p_{ij}\}}{\eta}
    \ge\frac{\widehat p_{y(j),j}}{\bar\eta}.
\end{align*}
Therefore, \(
    \widetilde r_{y(j),j}p_{y(j),j}
    \le \widetilde r_{ij}p_{ij},
\) and the tie-breaking rule selects \(y(j)\).
Since this holds for every job, \textsc{TolerantEdgeSkip} returns the schedule \(y\).
As shown in the proof of Lemma~\ref{lem:edge-skip-basic}, the load of every machine in \(y\), evaluated at \(\widehat{\mathbf p}\), is at most \((1+\gamma)T\).
Moreover, \(p_{ij}\le\eta\widehat p_{ij}\) for every \(i,j\).
Hence, for every machine \(h\), the actual load is
\[
    \sum_{j:y(j)=h}p_{hj}
    \le \eta\sum_{j:y(j)=h}\widehat p_{hj}
    \le \eta(1+\gamma)T.
\]
Taking the maximum over \(h\), the makespan of \(y\) under \(\mathbf p\) is at most \(\eta(1+\gamma)T\).
If \(\mathbf x^*(\mathbf p)\) is optimal for the actual instance, its predicted makespan is at most \(\eta\OPT(\mathbf p)\), and hence \(\OPT(\widehat{\mathbf p})\le\eta\OPT(\mathbf p)\).
Using \(T\le\kappa\OPT(\widehat{\mathbf p})\) proves the error-dependent guarantee \((1+\gamma)\kappa\eta^2\).

For the unconditional guarantee, let \(\mathbf x\) be the allocation produced on \((\mathbf p,\widehat{\mathbf p})\). For every \(i,h,j\), observe that
\begin{align*}
    \frac{\widetilde r_{ij}}{\widetilde r_{hj}}
    \le\bar\eta^2\frac{r^0_{ij}}{r^0_{hj}}.
\end{align*}
Fix a machine \(h\) and a partial matching \(Q\).
If \((h,j)\in Q\), that edge contributes \(1\), and applying the preceding ratio bound to the other edges and \(w(Q)\le R_{\kappa}((1+\gamma)\kappa)\) gives
\[
    \sum_{(i,j)\in Q}\frac{\widetilde r_{ij}}{\widetilde r_{hj}}
    \le1+\bar\eta^2\bigl(R_{\kappa}((1+\gamma)\kappa)-1\bigr).
\]
If \(Q\) contains no such edge, then \(|Q|\le n-1\); using this in~\eqref{eq:edge-skip-master} gives \(w(Q)\le R_{\kappa}((1+\gamma)\kappa)-1\), and hence
\[
    \sum_{(i,j)\in Q}\frac{\widetilde r_{ij}}{\widetilde r_{hj}}
    \le\bar\eta^2\bigl(R_{\kappa}((1+\gamma)\kappa)-1\bigr).
\]
Set \(J_i=\{j:x_{hj}=1,\ x^*_{ij}=1\}\).
For every nonempty \(J_i\), choose \(j_i\) maximizing \(\widetilde r_{ij}/\widetilde r_{hj}\); then \(Q_h=\{(i,j_i):J_i\ne\varnothing\}\) is a partial matching.
Since \(\widetilde r_{hj}p_{hj}\le\widetilde r_{ij}p_{ij}\) for every \(j\in J_i\),
\begin{align*}
    L_h(\mathbf p,\mathbf x)
    &=\sum_{i:J_i\ne\varnothing}\sum_{j\in J_i}p_{hj}\\
    &\le\sum_{i:J_i\ne\varnothing}
       \frac{\widetilde r_{ij_i}}{\widetilde r_{hj_i}}
       \sum_{j\in J_i}p_{ij}\\
    &\le\sum_{(i,j_i)\in Q_h}
       \frac{\widetilde r_{ij_i}}{\widetilde r_{hj_i}}\OPT(\mathbf p)\\
    &\le
    \left[1+\bar\eta^2\bigl(R_{\kappa}((1+\gamma)\kappa)-1\bigr)\right]
    \OPT(\mathbf p).
\end{align*}
Taking the maximum over \(h\) proves the unconditional robustness bound.
\end{proof}

For general \(\kappa\) and \(\gamma\), the corresponding guarantees of \textsc{ErrorTolerantScaledGreedy} are \((2+\gamma)\kappa\eta^2\) when \(\eta\le\bar\eta\) and \(n(1+\frac{1}{\gamma})\bar\eta^2\) for arbitrary predictions \cite{balkanski2023strategyproof}.
\textsc{TolerantEdgeSkip} strictly improves the first bound to \((1+\gamma)\kappa\eta^2\) and the second to
\(1+\bar\eta^2\bigl(R_{\kappa}((1+\gamma)\kappa)-1\bigr)\).

With a \(2\)-approximate reference schedule and \(\gamma=1\), \textsc{TolerantEdgeSkip} is a \(4\eta^2\)-approximation whenever \(\eta\le\bar\eta\), and an unconditional \(1+(2n-3)\bar\eta^2\)-approximation.
The corresponding bounds for \textsc{ErrorTolerantScaledGreedy} are \(6\eta^2\) and \(2n\bar\eta^2\) \cite{balkanski2023strategyproof}.

\section{Information-Theoretic Lower Bounds}\label{sec:lower-bounds}

This section presents two lower bounds that do not impose any computational restriction. We first use a simple instance with three jobs to prove a lower bound for all deterministic strategyproof mechanisms. We then use a replicated-job construction to prove that every \(C\)-consistent job-wise weighted mechanism has robustness at least \(\max\{n,(n-1)C/(C-1)\}\), exactly matching the upper bound achieved by \textsc{EdgeSkip}.

\subsection{A General Lower Bound}\label{sec:basic-lower-bound}

There are two independent obstacles to the robustness guarantee $R$.
First, fixing any prediction \(\widehat{\mathbf p}\) leaves an ordinary deterministic strategyproof scheduling rule \(\mathbf x_{\widehat{\mathbf p}}(\mathbf p):=\mathbf x(\mathbf p;\widehat{\mathbf p})\).
For \(n\) unrelated machines, no deterministic strategyproof scheduling mechanism can guarantee a makespan approximation ratio better than \(n\) \cite{christodoulou2023proof}.
Therefore, \(R\ge n\).
The next theorem gives the second obstacle using only two machines and three jobs.

\begin{theorem}\label{thm:basic-lower-bound}
Let a deterministic strategyproof mechanism be \(C\)-consistent and \(R\)-robust on \(n\ge2\) unrelated machines.
If \(C>1\), then
\(
    R\ge \max\left\{n,\frac{C}{C-1}\right\}.
\)
If \(C=1\), then its robustness is unbounded.
\end{theorem}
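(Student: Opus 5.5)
The bound \(R\ge n\) was already obtained in the paragraph preceding the theorem: fixing the prediction leaves an ordinary deterministic strategyproof mechanism, so \cite{christodoulou2023proof} applies. What remains is \(R\ge C/(C-1)\) for \(C>1\), and unbounded robustness for \(C=1\). For this I will build a two-machine, three-job instance. For \(n>2\), machines \(3,\dots,n\) get a huge processing time on every job, both in the prediction and in all reports used below. Their loads are then never used by a good allocation and never affect \(\OPT\), and weak monotonicity is applied only to machines \(1\) and \(2\).

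\textbf{Monotonicity tool.} For fixed reports of the others, suppose machine \(i\) receives the set \(S\) under report \(\mathbf p_i\). Consider a new report \(\mathbf p'_i\) with \(p'_{ij}<p_{ij}\) for \(j\in S\) and \(p'_{ij}>p_{ij}\) for \(j\notin S\). Then machine \(i\) still receives exactly \(S\). I will derive this from weak monotonicity in the usual Nisan--Ronen way. If the new set \(S'\ne S\), then \(\sum_j(p_{ij}-p'_{ij})(x_{ij}-x'_{ij})\) is strictly positive. This holds because every \(j\in S\setminus S'\) contributes a positive term, every \(j\in S'\setminus S\) contributes a positive term, and all other terms are zero. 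That contradicts weak monotonicity.

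\textbf{Construction and steps.}
\begin{enumerate}
\item Fix a predicted matrix \(\widehat{\mathbf p}\) with a parameter \(t\) (to be tuned as a function of \(C\)) and a small \(\varepsilon>0\). The matrix should be such that any allocation of makespan at most \(C\cdot\OPT(\widehat{\mathbf p})\) must give some machine, say machine \(2\), a specific set \(S\). Concretely, assigning any job of \(S\) to machine \(1\) should push machine \(1\)'s load above \(C\cdot\OPT(\widehat{\mathbf p})\), taking \(t\approx C-1\) plus \(\varepsilon\).
\item Apply \(C\)-consistency with reports \(\mathbf p=\widehat{\mathbf p}\); this fixes the allocation.
\item Keep the prediction fixed and let one machine shrink its costs on the jobs it already holds and inflate its costs on the others. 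By the tool its bundle does not change, so the allocation is unchanged.
\item In the resulting true instance, check that the mechanism's makespan is about \(1\), while a reallocation achieves makespan about \((C-1)/C\). This gives \(\rho\ge C/(C-1)-O(\varepsilon)\); letting \(\varepsilon\to0\) gives the bound.
\end{enumerate}

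\textbf{Case \(C=1\).} The same construction applies with the forcing made exact: perfect consistency pins down the optimal allocation on \(\widehat{\mathbf p}\). After the deviation, the remaining loads can be made arbitrarily small relative to the frozen one, so the ratio is unbounded.

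\textbf{Main obstacle.} The hard step is choosing the three-job matrix so that two requirements hold at once. First, consistency must force a unique bundle, which is what lets the tool freeze the allocation. Second, after the monotone deviation, the frozen allocation must be off by exactly the factor \(C/(C-1)\) and not some smaller constant. My naive attempts with one private job on machine \(1\) and two shared jobs only gave ratio about \(2\). I would therefore first try making the third job the one whose placement is forced by consistency, with costs \(1\) on one machine and \(C-1\) (plus \(\varepsilon\)) on the other. I would then tune the two remaining entries so that the post-deviation optimum moves it and balances loads at \((C-1)/C\) of the mechanism's load.
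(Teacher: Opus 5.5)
Your reduction of \(R\ge n\) to \cite{christodoulou2023proof} and your freezing tool are both correct. But the part that carries the theorem, an instance that yields \(C/(C-1)\), is missing, and the route you sketch (consistency forces a unique allocation, then a single freezing deviation) provably cannot reach it.

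Here is why. Suppose consistency forces a unique allocation \((S_1,S_2)\) on two effective machines. That allocation must then be optimal, with makespan \(\theta=\OPT(\widehat{\mathbf p})\), and every other allocation has predicted makespan above \(C\theta\). Write \(\widehat p_1(B)=\sum_{j\in B}\widehat p_{1j}\). Let machine~1 freeze, producing reports \(\mathbf p'\). The allocation is unchanged, so the mechanism's makespan satisfies \(M'\le\theta\). Take an optimal allocation \((B_1,B_2)\ne(S_1,S_2)\) for \(\mathbf p'\). Machine~2's costs are unchanged and \(\OPT(\mathbf p')\le M'\le\theta\), so we must have \(\widehat p_1(B_1)>C\theta\). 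Since \(\widehat p_1(B_1\cap S_1)\le\theta\) and machine~1's costs off \(S_1\) only went up, this gives \(\OPT(\mathbf p')\ge\widehat p_1(B_1\setminus S_1)>(C-1)\theta\). The ratio is therefore below \(1/(C-1)\), exactly one short of \(C/(C-1)=1+1/(C-1)\), and the same holds by symmetry if machine~2 freezes. So your step~4 target is unreachable this way, which fits your attempts stalling.

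The missing idea is a second, non-freezing deviation. The paper predicts rows \((K,1,H)\) and \((H,K,\delta)\), with \(K<1/(C-1)\) and \(K+1>C(K+\delta)\), which forces \(a\to1\) and \(b,c\to2\). It then freezes machine~1 to \((\delta,1+\xi,H)\); by the bound above, this step alone gives only about \(K\).

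Next, machine~2 changes its report from \((H,K,\delta)\) to \((H,K-\xi,1)\). This raises its cost on job \(c\), which it holds, so your tool says nothing about it. The paper keeps \(c\) on machine~2 by robustness, because \(c\) costs \(H\) everywhere else. Weak monotonicity, \(\xi(1-x''_{2b})-(1-\delta)(1-x''_{2c})\le0\) with \(x''_{2c}=1\), then forces \(x''_{2b}=1\). Machine~2's load becomes \(K+1-\xi\), while assigning \(\{a,b\}\to1\) and \(c\to2\) costs \(1+\xi+\delta\). Letting \(K\to1/(C-1)\) gives the extra \(+1\).

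The \(C=1\) case uses the same instance with \(K\) arbitrarily large. Your remark that ``the same construction applies'' currently has no construction behind it.
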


\begin{proof}
The bound \(R\ge n\) follows from the information-theoretic lower bound of \cite{christodoulou2023proof} for \(n\) unrelated machines without predictions.
It remains to prove the consistency-dependent term.
First suppose that \(C>1\).
Choose \(0<K<\frac{1}{C-1}\), and then choose \(0<\delta<\min\{K,1\}\) small enough that \(K+1>C(K+\delta)\). Also choose \(0<\xi<K\).
It is enough to use two machines and three jobs \(a,b,c\); on any additional machine, all three jobs have a sufficiently large processing time \(H\).
Consider the prediction
\begin{align*}
    \widehat{\mathbf p}=
    \begin{array}{c|ccc}
        &a&b&c\\ \hline
        1&K&1&H\\
        2&H&K&\delta
    \end{array}.
\end{align*}
Its optimal makespan is \(K+\delta\), attained by assigning \(a\) to machine 1 and \(b,c\) to machine 2.
For large enough \(H\), consistency forces \(a\) and \(c\) to these machines.
It also forces \(b\) to machine 2, since assigning it to machine 1 would give makespan \(K+1\), and \(K+1>C(K+\delta)\) violates the consistency requirement.

Keep this prediction fixed and first change only the report of machine 1 from \((K,1,H)\) to \((\delta,1+\xi,H)\).
Let \(x'_{1a}\) and \(x'_{1b}\) be its new allocation indicators.
Weak monotonicity gives
\begin{align*}
    (K-\delta)(1-x'_{1a})+\xi x'_{1b}\le0.
\end{align*}
Both terms are nonnegative, so machine 1 continues to receive \(a\) and not \(b\).
This is the first locking step: the processing time of the job already assigned to machine 1 is decreased, while that of the unassigned job is increased.
If \(b\) or \(c\) is now assigned to a machine on which its processing time is \(H\), the approximation ratio is already arbitrarily large as \(H\) grows.
We may therefore continue with the only remaining case, in which machine 2 receives both jobs.

Keeping the modified report \((\delta,1+\xi,H)\) of machine 1 fixed, change only the report of machine 2 from \((H,K,\delta)\) to \((H,K-\xi,1)\).
The job \(c\) remains on machine 2: otherwise the mechanism already has an arbitrarily large approximation ratio as \(H\) grows.
Writing \(x''_{2b}\) and \(x''_{2c}\) for the new allocation indicators, weak monotonicity for this second change gives
\begin{align*}
    \xi(1-x''_{2b})-(1-\delta)(1-x''_{2c})\le0.
\end{align*}
We have just argued that \(x''_{2c}=1\), and hence this inequality forces \(x''_{2b}=1\).
Thus, on the final instance, machine 2 receives both \(b\) and \(c\), and the mechanism has makespan at least \(K-\xi+1\).
The allocation that assigns \(a,b\) to machine 1 and \(c\) to machine 2 has makespan \(1+\xi+\delta\).
Consequently,
\begin{align*}
    R\ge \frac{K+1-\xi}{1+\xi+\delta}.
\end{align*}
Letting \(\xi\) and \(\delta\) tend to zero and \(K\) tend to \(\frac{1}{C-1}\) proves \(R\ge1+\frac{1}{C-1}=\frac{C}{C-1}\).
When \(C=1\), the same construction allows \(K\) to be arbitrarily large, and hence no finite robustness is possible.
\end{proof}

For \(C>1\), the construction gives the quantitative dependence \(\frac{C}{C-1}\), which decreases from infinity to two as \(C\) increases from one to two.
When \(C\ge2\), this term is at most two and is already covered by \(R\ge n\).
The interesting range for improving the prediction-dependent lower bound is therefore \(1<C<2\).

\subsection{A Lower Bound for Job-Wise Weighted Mechanisms}\label{sec:weighted-lower-bound}

The preceding lower bound applies without any restriction on how the jobs are allocated.
For job-wise weighted mechanisms, consistency can force many distinct jobs to satisfy the same large lower bound on the weight ratio between a common pair of machines.
The following lemma converts these repeated ratios directly into a worst-case instance.

\begin{lemma}\label{lem:repeated-ratios}
Let \(\mathbf r\) be a positive weight matrix.
Suppose that two distinct machines \(s,h\), a set \(J\) of \(n-1\) distinct jobs, and a value \(K\ge1\) satisfy \(\frac{r_{sj}}{r_{hj}}\ge K\) for every \(j\in J\).
If there is an additional job \(a\notin J\), then the worst-case approximation ratio of \(M_{\mathbf r}\) is at least \(n-1+K\).
\end{lemma}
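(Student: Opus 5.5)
We build one explicit instance on which \(M_{\mathbf r}\) puts all \(n\) jobs \(J\cup\{a\}\) on machine \(h\), while the optimum has makespan about \(1\). The ratio \(r_{sj}/r_{hj}\ge K\) lets \(h\) beat \(s\) on each job of \(J\) even when \(p_{hj}\) is nearly \(K\) times \(p_{sj}\). We spread the jobs of \(J\) over distinct machines in the optimum so that the slack is used on one job, not only on machine \(s\).

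\emph{Setup.} Let \(N\setminus\{h\}=\{s=i_1,i_2,\dots,i_{n-1}\}\) and write \(J=\{j_1,\dots,j_{n-1}\}\). All unspecified processing times (every job on machines other than \(h\) and its designated machine) are a huge value \(H\). Fix a small \(\epsilon>0\). On \(h\), every job gets processing time close to its threshold, so \(h\) wins every job.
\begin{itemize}
\item Job \(j_1\): set \(p_{sj_1}=1\) and \(p_{hj_1}=K(1-\epsilon)\le r_{sj_1}(1-\epsilon)/r_{hj_1}\). Then \(r_{hj_1}p_{hj_1}<r_{sj_1}p_{sj_1}\).
\item Jobs \(j_k\), \(k\ge2\): designate machine \(i_k\) with \(p_{i_kj_k}=1\), and set \(p_{hj_k}=(1-\epsilon)\,r_{i_kj_k}/r_{hj_k}\). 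We only need \(h\) to win strictly, so this is at least \(1-\epsilon\) whenever \(r_{i_kj_k}\ge r_{hj_k}\).
\end{itemize}
The ratio \(r_{i_kj_k}/r_{hj_k}\) is not controlled for \(k\ge2\). To handle this, we use the freedom to place any job on any machine: job \(j_k\) can be matched to machine \(i_k\) or to \(s\). The hypothesis guarantees ratio \(\ge K\) only with \(s\). So the cleaner route is to use the \(n-1\) jobs of \(J\) against \(s\) alone and let \(a\) and the other machines soak up the remaining load.

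\emph{Adjusted plan.} For \(j\in J\), set \(p_{sj}=1/(n-1)\)? That gives \(h\) load \(K\) in total but only \(OPT\approx1\) if \(s\) takes all of \(J\), yielding ratio \(K\) plus the contribution of \(a\) and other jobs. To reach \(n-1+K\), the other \(n-2\) machines must each contribute a unit of \(h\)'s load through jobs on which \(h\) wins. Since weights are arbitrary, the ratio for any machine can be made \(\ge1\) by giving that job processing time exactly at the threshold.

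\emph{Final construction.}
\begin{itemize}
\item One job \(j^\ast\in J\) goes to \(s\): \(p_{sj^\ast}=1\) and \(p_{hj^\ast}=(1-\epsilon)\,r_{sj^\ast}/r_{hj^\ast}\ge(1-\epsilon)K\).
\item Each remaining job of \(J\) is paired with a distinct machine \(i\notin\{s,h\}\), which accounts for \(n-2\) jobs and \(n-2\) machines. It gets \(p_{ij}=1\) and \(p_{hj}=(1-\epsilon)\,r_{ij}/r_{hj}\).
\item Job \(a\) gets \(p_{ha}=1\), and \(H\) elsewhere, so \(h\) wins it.
\end{itemize}
The optimum places each job on its designated machine and has makespan \(\max\) of unit loads and \(p_{hj}\)-type loads. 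To keep \(OPT\le1\), any job with \(p_{hj}>1\) must stay off \(h\), which holds.

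\emph{Main obstacle.} The mechanism load on \(h\) is \(1+(1-\epsilon)K+\sum(1-\epsilon)r_{ij}/r_{hj}\). That sum is at least \(n-2\) only if \(r_{ij}\ge r_{hj}\), which is not guaranteed. The fix is to exploit the \(n-1\) jobs with ratio \(\ge K\) against \(s\) together with the fact that an instance can route unit-time jobs anywhere. Concretely:
\begin{itemize}
\item On \(s\), give every \(j\in J\) processing time \(1\), and give each other machine one job of \(J\) at time \(1\), so no ratio assumption is needed on those machines.
\item On \(h\), set \(p_{hj}=(1-\epsilon)K\,r_{sj}/(K r_{hj})\ge(1-\epsilon)K\), and send other machines \(H\) except the designated one.
\end{itemize}
I would first check whether \(h\) then beats the designated machine. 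If it does not, I would fall back to taking \(p_{hj}=1\) for \(j\ne j^\ast\) with designated processing time \(r_{hj}/r_{ij}\). The optimum still has makespan \(1\), while \(h\) collects \(1+(n-2)+K(1-\epsilon)\). Letting \(\epsilon\to0\) gives ratio \(n-1+K\).
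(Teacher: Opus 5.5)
Your threshold realization is sound: unit time on each job's partner machine, time just below the weighted threshold on the overloaded machine, a huge value elsewhere, and unit time for $a$ on the overloaded machine. The gap is that you insist the overloaded machine be $h$, and no choice of processing times makes that work in general. The hypothesis controls only $r_{sj}/r_{hj}$; the ratios $r_{\ell j}/r_{hj}$ for $\ell\notin\{s,h\}$ may all be tiny. For example, take $n=3$, $N=\{s,h,t\}$, and $r_{hj}=1$, $r_{sj}=K$, $r_{tj}=\delta$ for every job. If $h$ receives a job $j$ that an optimal schedule places on $\ell$, then $p_{hj}\le (r_{\ell j}/r_{hj})p_{\ell j}$. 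Hence $L_h(\mathbf p,\mathbf x)\le(1+K+\delta)\OPT(\mathbf p)<(n-1+K)\OPT(\mathbf p)$ for every instance. Your fallback does not escape this. Giving the designated machine $i$ time $r_{hj}/r_{ij}$ while $p_{hj}=1$ makes that time exceed $1$ exactly when $r_{ij}<r_{hj}$, so your claim that the optimum still has makespan $1$ fails. In the example, $j$ costs $1/\delta$ on $t$, so the optimum puts it on $h$ together with $a$ and has makespan $2$, giving ratio only about $(2+K)/2$.

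The missing idea is to choose the overloaded machine adaptively and to use all of $J$ at once. The paper defines $g_\ell=(\prod_{j\in J}r_{\ell j})^{1/(n-1)}$ and overloads a machine $i$ minimizing $g_i$; in the example this is $t$, not $h$. It averages $1+\sum_{\ell\ne i}r_{\ell\pi(\ell)}/r_{i\pi(\ell)}$ over all bijections $\pi:N\setminus\{i\}\to J$ and applies AM--GM for each $\ell$, giving at least $\sum_{\ell\in N}g_\ell/g_i$. Every term is at least $1$ by minimality, and $g_s/g_i\ge g_s/g_h\ge K$, so some bijection has weight at least $n-1+K$. Only then is your realization step applied, with $i$ in place of $h$. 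Once $i\ne h$, the hypothesis gives no per-job lower bound on $r_{sj}/r_{ij}$, only a bound on the geometric mean. That is why pairing a single job $j^\ast$ with $s$, as in your plan, is insufficient, and an averaging argument over the whole of $J$ is needed.
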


\begin{proof}
For every machine \(i\), let
\(
    g_i=\left(\prod_{j\in J}r_{ij}\right)^{1/(n-1)}.
\)
The repeated-ratio assumption gives \(g_s/g_h\ge K\).
Fix a machine \(i\).
For each bijection \(\pi:N\setminus\{i\}\to J\), form the partial matching
\[
    Q_\pi=\{(i,a)\}\cup\{(\ell,\pi(\ell)):\ell\ne i\}.
\]
Averaging its weight relative to machine \(i\) over all \((n-1)!\) bijections, and then applying the arithmetic--geometric mean (AM--GM) inequality for each \(\ell\ne i\), gives
\begin{align*}
    \mathbb E_\pi\left[\sum_{(\ell,j)\in Q_\pi}
        \frac{r_{\ell j}}{r_{ij}}\right]
    &=1+\sum_{\ell\ne i}\frac1{n-1}
        \sum_{j\in J}\frac{r_{\ell j}}{r_{ij}}\\
    &\ge1+\sum_{\ell\ne i}
        \left(\prod_{j\in J}\frac{r_{\ell j}}{r_{ij}}\right)^{1/(n-1)}\\
    &=1+\sum_{\ell\ne i}\frac{g_\ell}{g_i}
     =\frac{\sum_{\ell\in N}g_\ell}{g_i}.
\end{align*}
Choose \(i\) minimizing \(g_i\).
Every ratio \(g_\ell/g_i\) is at least one, and
\(g_s/g_i\ge g_s/g_h\ge K\).
The last sum is therefore at least \(n-1+K\).
Hence some bijection \(\pi\) satisfies \(\sum_{(\ell,j)\in Q_\pi}r_{\ell j}/r_{ij}\ge n-1+K\).
Fix such a bijection \(\pi\).

It remains to realize this sum as an approximation ratio.
Choose \(\varepsilon>0\) smaller than every ratio \(r_{\ell j}/r_{ij}\) appearing in \(Q_\pi\).
For each \((\ell,j)\in Q_\pi\) with \(\ell\ne i\), set
\[
    p_{\ell j}=1,
    \qquad
    p_{ij}=\frac{r_{\ell j}}{r_{ij}}-\varepsilon,
\]
and make the processing time of \(j\) on every other machine sufficiently large.
Then \(r_{ij}p_{ij}<r_{\ell j}p_{\ell j}\), so \(M_{\mathbf r}\) assigns \(j\) to \(i\).
For the edge \((i,a)\), set \(p_{ia}=1\) and make the processing time of \(a\) on every other machine sufficiently large.
Give every remaining job processing time \(\delta>0\) on every machine.
The weighted mechanism puts load at least \(\sum_{(\ell,j)\in Q_\pi}r_{\ell j}/r_{ij}-n\varepsilon\) on \(i\), whereas assigning each job in \(Q_\pi\) to the machine paired with it gives a feasible schedule of makespan at most \(1+m\delta\), and hence \(\OPT(\mathbf p)\le1+m\delta\).
Letting \(\delta\) and then \(\varepsilon\) tend to zero proves the claim.
\end{proof}

\begin{theorem}\label{thm:weighted-lower-bound}
Let a prediction-dependent job-wise weighted mechanism be \(C\)-consistent and \(R\)-robust on \(n\ge2\) unrelated machines.
If \(C>1\), then
\(
    R\ge
    \max\left\{n,\frac{(n-1)C}{C-1}\right\}.
\)
If \(C=1\), then its robustness is unbounded.
\end{theorem}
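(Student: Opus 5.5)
The bound \(R\ge n\) follows exactly as in Theorem~\ref{thm:basic-lower-bound}: for a fixed prediction a job-wise weighted mechanism is a deterministic strategyproof mechanism, so the lower bound of \cite{christodoulou2023proof} applies. The consistency-dependent term is the real content. The plan is to design a single prediction \(\widehat{\mathbf p}\) on which \(C\)-consistency forces \(n-1\) distinct jobs \(J\) to have a large weight ratio \(r_{sj}/r_{hj}\ge K\) between one common pair of machines \(s\ne h\), and which also contains an extra job \(a\notin J\). Lemma~\ref{lem:repeated-ratios} then gives \(R\ge n-1+K\). We want \(K\) arbitrarily close to \(\frac{n-1}{C-1}-(n-1)+1=\frac{(n-1)C}{C-1}-(n-1)\), so that \(n-1+K\to\frac{(n-1)C}{C-1}\).

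For the prediction, I would replicate the two-machine gadget of Theorem~\ref{thm:basic-lower-bound}. Machine \(s\) (say machine 1) predicts a small time \(\delta\) on each of \(n-1\) jobs \(b_1,\dots,b_{n-1}\), and machine \(h\) predicts time \(1\) on them. To make consistency force all of \(J\) onto \(s\), I would add one private job \(a\) with predicted time \(K'\) on \(h\) and \(H\) elsewhere. The \(b\)-jobs may also be placed on other machines with prohibitive time \(H\). The parameter is tuned so that \(\OPT(\widehat{\mathbf p})\approx\max\{K',(n-1)\delta\}\), while moving even one \(b_j\) to \(h\) yields makespan \(\ge K'+1>C\cdot\OPT\) whenever \(K'<1/(C-1)\). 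Consistency then assigns every \(b_j\) to \(s\) under \(\widehat{\mathbf p}\), which gives \(r_{sj}\delta\le r_{hj}\cdot 1\). This is the wrong direction, so the roles must be arranged so that consistency forces jobs \emph{away} from the machine that is fast on them. I would therefore use instead: \(h\) is predicted fast (time \(\varepsilon\)) on each \(b_j\) but already loaded by a big job, so that consistency forces \(b_j\) onto \(s\) at cost \(1\). This yields \(r_{sj}\cdot 1\le r_{hj}\varepsilon\), i.e. a large ratio \(r_{hj}/r_{sj}\) in the needed orientation after relabeling.

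The quantitative step is to choose loads so the forced ratio is as large as \(\frac{(n-1)C}{C-1}-(n-1)\). Spreading the \(n-1\) forced jobs across distinct machines \(s_1,\dots,s_{n-1}\) only gives ratio \(\frac{1}{C-1}\) each. However, Lemma~\ref{lem:repeated-ratios} needs a common pair, so all \(n-1\) jobs must go through one pair \((s,h)\), and consistency must bound the \emph{single} job's contribution. I would first try the following design. Let \(h\) carry predicted base load \(L\), and let each \(b_j\) cost \(1\) on \(s\) and \(x\) on \(h\), with \(\OPT(\widehat{\mathbf p})\) normalized so that placing any one \(b_j\) on \(h\) violates \(C\). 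The worry is that a tie-breaking or mixed allocation might avoid forcing the ratio on every \(j\). Since allocation is job-wise, though, each job's assignment is decided separately: under \(\widehat{\mathbf p}\), job \(b_j\) goes to \(s\) iff \(r_{sj}\cdot1\le r_{hj}x\), so each ratio bound follows individually. I would then optimize \(L,x\) under the constraint \(L+x>C\cdot\OPT\) with \(\OPT\ge\) (the average load implied by the \(n-1\) unit jobs placed elsewhere) to extract \(K=x\to\frac{n-1}{C-1}-(n-2)\).

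I expect this calibration to be the main obstacle: making \(\OPT(\widehat{\mathbf p})\) small enough that \(x\) can be of order \(\frac{n-1}{C-1}\) while the \(n-1\) jobs still legitimately sit on other machines in a near-optimal schedule. In the optimal schedule the \(b_j\) should be spread over the \(n-1\) machines other than \(h\), each contributing about \(1\), whereas the mechanism's forced schedule is the relevant comparison only per job.

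The case \(C=1\) follows by the same construction: the admissible \(x\) is then unbounded, so \(n-1+K\) is unbounded.
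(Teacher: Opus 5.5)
Your proof of \(R\ge n\) and your plan to finish with Lemma~\ref{lem:repeated-ratios} are fine, but the core construction has a genuine gap, and the quantitative target is already off. Since \(\frac{(n-1)C}{C-1}-(n-1)=\frac{n-1}{C-1}\), you need \(K\to\frac{n-1}{C-1}\). The identity you wrote, \(\frac{n-1}{C-1}-(n-1)+1=\frac{(n-1)C}{C-1}-(n-1)\), is false for \(n\ge3\). Your final calibration \(K\to\frac{n-1}{C-1}-(n-2)\) would only yield \(R\ge 1+\frac{n-1}{C-1}\), which is short of the claim by \(n-2\). There is also an orientation slip: if \(b_j\) costs \(1\) on \(s\) and \(x\) on \(h\), then sending \(b_j\) to \(s\) gives \(r_{hj}/r_{sj}\ge 1/x\), not \(K=x\).

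More fundamentally, the forcing you describe cannot produce a large ratio on a common pair. In that forcing, each job is individually kept off the fast machine because that machine's unavoidable load plus the one job exceeds \(C\,\OPT\). Use the roles of Lemma~\ref{lem:repeated-ratios}: the \(n-1\) jobs of \(J\) sit on \(h\) with \(\widehat p_{hj}\ge K\widehat p_{sj}\). The unavoidable load on \(s\) is at most \(\OPT\), so per-job forcing needs \(\widehat p_{sj}>(C-1)\OPT\) for every \(j\in J\). Consistency on \(h\) gives \(\sum_{j\in J}\widehat p_{hj}\le C\,\OPT\). Together these give \(K<\frac{C}{(n-1)(C-1)}\). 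Spreading the jobs over several machines to lower \(\OPT\) destroys the common pair, as you noticed yourself.

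The missing idea is an aggregate budget argument with many tiny jobs, so that consistency forces no particular job but still forces many jobs to stay on a slow machine. In the paper, a root machine \(0\) carries an anchor job of predicted time \(1\), so \(\OPT(\widehat{\mathbf p})=1\). Each leaf \(i\) has \(k\) regular jobs costing \(1/k\) on leaf \(i\), \(c/k\) on the root, and \(H\) elsewhere, with \(c\in\bigl(\frac{C-1}{n-1},1\bigr)\). Consistency lets the root absorb at most \((C-1)k/c\) regular jobs in total, while all regular jobs together would need \((n-1)c>C-1\) units of root time. Hence at least \(k\bigl(n-1-\frac{C-1}{c}\bigr)\) jobs stay on their own, slower leaves. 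For large \(k\), pigeonhole gives a single leaf \(h\) keeping \(n-1\) of them, each satisfying \(r_{0j}/r_{hj}\ge 1/c\). Lemma~\ref{lem:repeated-ratios} with \(s=0\), \(K=1/c\), and the anchor as the extra job then gives \(R\ge n-1+1/c\to\frac{(n-1)C}{C-1}\). Your \(C=1\) case also rests on the unspecified construction; the paper instead invokes Theorem~\ref{thm:basic-lower-bound}.
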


\begin{proof}
When \(C\ge n\), the second term in the claimed bound is at most \(n\), so the bound reduces to \(R\ge n\). This follows as in Theorem~\ref{thm:basic-lower-bound}, by fixing the prediction and applying the lower bound for deterministic strategyproof scheduling.
When \(C=1\), the unboundedness follows from the same theorem. It therefore suffices to consider \(1<C<n\).

Choose a value $c\in\left(\frac{C-1}{n-1},1\right)$.
Relabel the \(n\) machines as one root machine \(0\) and leaf machines \(1,\ldots,n-1\). There is an anchor job \(a\) and, for each leaf \(i\), there are \(k\) regular jobs \(j_{i,1},\ldots,j_{i,k}\).
For a finite value \(H>C\), define the prediction by
\begin{align*}
    &\widehat p_{0a}=1,
    \qquad \widehat p_{ia}=H &&(i\ne0),\\
    &\widehat p_{0j_{i,t}}=\frac{c}{k},
    \qquad \widehat p_{ij_{i,t}}=\frac1k,
    \qquad \widehat p_{\ell j_{i,t}}=H
    &&(\ell\notin\{0,i\}).
\end{align*}
Assigning the anchor to the root and each regular job to its own leaf has makespan one.
Hence, \(\OPT(\widehat{\mathbf p})=1\).

When the prediction is correct, \(C\)-consistency rules out every \(H\)-entry.
Thus the anchor goes to the root, while a regular job goes either to the root or to its own leaf.
If the root receives \(q\) regular jobs, its load is \(1+qc/k\), and hence
\(
    q\le\frac{(C-1)k}{c}.
\)
At least \(k(n-1-(C-1)/c)\) regular jobs therefore remain on their leaves.
Choose a sufficiently large but finite \(k\) so that
\begin{align}\label{eq:replication-count}
    k\left(n-1-\frac{C-1}{c}\right)>(n-1)(n-2).
\end{align}
By the pigeonhole principle, some leaf \(h\) receives at least \(n-1\) of its own regular jobs.

Fix the weight matrix \(\mathbf r\) selected for this prediction.
For every such job \(j\) assigned to \(h\), the weighted-score comparison with the root gives
\(
    r_{hj}\frac1k\le r_{0j}\frac ck,
    \) and \(
    \frac{r_{0j}}{r_{hj}}\ge\frac1c.
\)
Choose \(n-1\) of these distinct jobs and use the anchor as the additional job in Lemma~\ref{lem:repeated-ratios}, with \(s=0\) and \(K=1/c\).
Lemma~\ref{lem:repeated-ratios} therefore yields
\(
    R\ge n-1+\frac1c.
\)
Letting \(c\) decrease to \((C-1)/(n-1)\), and choosing a finite \(k\) satisfying~\eqref{eq:replication-count} each time, proves
\(
    R\ge n-1+\frac{n-1}{C-1}.
\)
\end{proof}

The lower bound in Theorem~\ref{thm:weighted-lower-bound} matches the upper bound of \textsc{EdgeSkip} with an optimal reference schedule: setting \(\kappa=1\) in Theorem~\ref{thm:edge-skip} gives \(R_{1}(C)=\max\{n,\frac{(n-1)C}{C-1}\}\).
Therefore, over job-wise weighted mechanisms that work uniformly for every finite number of jobs, the optimal robustness of a \(C\)-consistent mechanism is
\(
\max\left\{n,\frac{(n-1)C}{C-1}\right\}
\)
for every \(C>1\).

\section{Experiments}\label{sec:experiments}

We evaluate the polynomial-time mechanisms under truthful reports on the HCSP
benchmark instances of Braun et al.~\cite{braun2001comparison}, using predictions obtained by synthetically perturbing the true processing-time matrices. The sole
performance measure is the ratio
\begin{align*}
    \rho(\mathbf p,\widehat{\mathbf p})
    =
    \frac{\MS(\mathbf p,\mathbf x(\mathbf p;\widehat{\mathbf p}))}
         {\OPT(\mathbf p)},
\end{align*}
where smaller values are better. All averages below are averages of \(\rho\).

\subsection{Setup}\label{sec:experimental-setup}

We compare \textsc{EdgeSkip} and \textsc{TolerantEdgeSkip} with
\textsc{ScaledGreedy}, \textsc{ErrorTolerantScaledGreedy}, and
\textsc{SimpleScaledGreedy}~\cite{balkanski2023strategyproof}.
For each job, \textsc{SimpleScaledGreedy} scales every machine's reported
processing time by the predicted speedup of that machine relative to the reference machine,
truncated to the interval \([1,n]\), and selects a machine with minimum weighted reported processing time. We also include
\textsc{Greedy}, which sends each job to a fastest machine under the reported
processing times, and the report-independent schedule computed by the
Lenstra--Shmoys--Tardos (LST) \(2\)-approximation algorithm from \(\widehat{\mathbf p}\),
denoted by \textsc{PredictionOnly}~\cite{lenstra1990approximation}. The five
prediction-dependent job-wise weighted mechanisms use this same LST reference
schedule; \textsc{PredictionOnly} returns the schedule itself. We set \(\gamma=1\) and
\(\bar\eta=2\) throughout and use fixed deterministic tie-breaking rules.

We use all 12 HCSP expected-time-to-compute matrices,
each with 512 jobs and 16 machines. For each of three fixed random seeds used
to construct instances, we select eight machines from every matrix and partition all 512 jobs
into 16 disjoint batches of 32 jobs.  For each resulting instance, we generate
\begin{align*}
    \widehat p_{ij}=p_{ij}\eta^{z_{ij}},
    \qquad
    \eta\in\{1,1.1,1.25,1.5,2,3,5,10\}.
\end{align*}
Each exponent matrix is rescaled so that \(\max_{i,j}|z_{ij}|=1\), making its
realized symmetric multiplicative error exactly \(\eta\). We use three error
structures: entrywise independent and identically distributed (i.i.d.) exponents
uniform on \([-1,1]\); machine--job error with correlated exponents
\(z_{ij}=a_i+b_j+0.15\epsilon_{ij}\), where \(a_i\), \(b_j\), and
\(\epsilon_{ij}\) are independent standard normal random variables representing
machine, job, and residual effects, respectively; and sparse error, in which only \(5\%\) of the entries
have nonzero exponents drawn uniformly from \([-1,1]\). For every \(\eta>1\), error
structure, and batch, we use three fixed perturbation seeds. The case
\(\eta=1\) is the common perfect prediction.

For evaluation only, we use a mixed-integer program to certify
\(\OPT(\mathbf p)\).  This value is used solely in the denominator of \(\rho\);
it is never supplied to a mechanism and is never used to construct the
reference schedule. For each of the seven reported methods, we use the
same 36,864 HCSP trial inputs. We first average over perturbation seeds, then
over batches, and then over the three constructions. Finally, we give equal
weight to each of the 12 source matrices; we call the resulting quantity the
macro average.

\subsection{Experimental Results}\label{sec:experimental-results}

Figure~\ref{fig:hcsp-headline} presents the direct HCSP comparisons
between each proposed mechanism and its \textsc{ScaledGreedy} counterpart.

\begin{figure}[H]
    \centering
    \includegraphics[width=\linewidth]{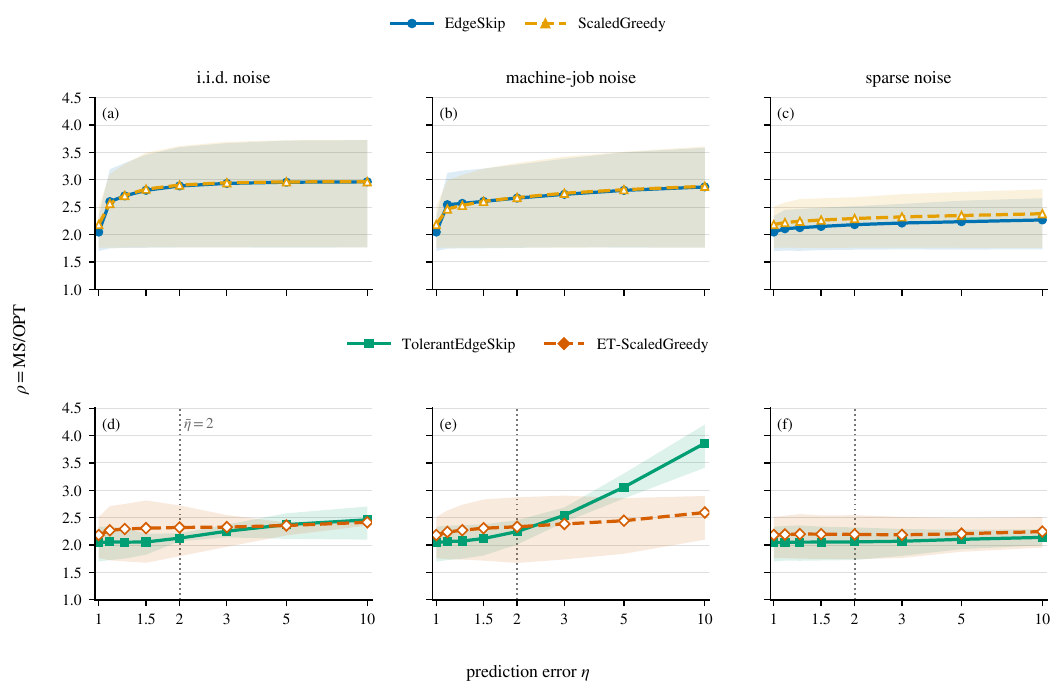}
    \caption{Direct comparisons on HCSP\@. Columns show i.i.d., machine--job,
    and sparse prediction errors.  The top row compares \textsc{EdgeSkip}
    with \textsc{ScaledGreedy}; the bottom row compares \textsc{TolerantEdgeSkip}
    with \textsc{ErrorTolerantScaledGreedy}.  Lines are equal-weight means over
    the 12 source matrices after the within-matrix averaging described in
    Section~\ref{sec:experimental-setup}; bands span the 25th--75th percentiles
    of those 12 matrix means.  The common perfect-prediction point is repeated
    in all three columns, and the dotted line in the lower row marks
    \(\bar\eta=2\). Lower is better.}
    \label{fig:hcsp-headline}
\end{figure}

Across the 21 noisy settings, \textsc{EdgeSkip} has lower
macro-averaged \(\rho\) than \textsc{ScaledGreedy} in 17, while
\textsc{TolerantEdgeSkip} improves on \textsc{ErrorTolerantScaledGreedy} in
16. Both paired comparisons favor our mechanisms at all seven noisy values of \(\eta\) under
sparse error. The advantage of our
tolerant mechanism is not uniform: under machine--job error at \(\eta=10\),
\textsc{TolerantEdgeSkip} reaches \(3.8601\), compared with \(2.5948\) for
\textsc{ErrorTolerantScaledGreedy}, and is the worst of all seven methods in
that setting.

With perfect HCSP predictions, \textsc{EdgeSkip} reduces the
macro-averaged \(\rho\) from \(2.1837\) to \(2.0457\) relative to
\textsc{ScaledGreedy}; the same values hold for their tolerant variants.
\textsc{SimpleScaledGreedy} performs best at \(\eta=1\), with \(1.3757\), followed
by \textsc{PredictionOnly} at \(1.3854\).

The complete HCSP comparison is reported in
Appendix~\ref{app:experiments}.

\section{Conclusion}\label{sec:conclusion}

We establish tighter consistency--robustness tradeoffs for strategyproof scheduling with predictions.
For job-wise weighted mechanisms, our upper and lower bounds match in the absence of computational restrictions.
With an optimal reference schedule, \textsc{EdgeSkip} is \(C\)-consistent and attains the optimal robustness
\(\max\{n,\frac{(n-1)C}{C-1}\}\) for every \(C>1\).
With a \(2\)-approximate reference schedule and \(\gamma=1\), it is \(4\)-consistent and \((2n-2)\)-robust, improving the polynomial-time tradeoff.
The same improvement extends to the error-tolerant setting.
Thus, we determine the exact weighted frontier and improve its polynomial-time guarantees, while a gap remains for unrestricted deterministic strategyproof mechanisms.

The main open problem is whether a general deterministic strategyproof mechanism can strictly improve on the optimal weighted frontier when \(n\ge3\), or whether stronger unrestricted lower bounds preclude such an improvement.
Another question is to determine the optimal polynomial-time frontier, since an optimal predicted reference schedule cannot in general be computed efficiently.
Further directions include obtaining the same tradeoffs using less prediction information and understanding how randomization or more refined error measures change the frontier.

\bibliographystyle{plain}
\bibliography{mybibfile}

\appendix

\section{The Two-Machine Case}\label{app:two-machines}

This appendix isolates the special case of two unrelated machines. Its purpose is to record a simple threshold mechanism whose proof avoids the forest argument used for \textsc{EdgeSkip} and whose guarantee can be stronger when the reference schedule is only approximately optimal.

Suppose \(N=\{1,2\}\), and fix parameters \(C>1\) and \(\kappa\ge1\).
Let \(\widehat\sigma\) be a \(\kappa\)-approximate reference schedule computed from the prediction \(\widehat{\mathbf p}\), and set \(\lambda_C=\max\{1,\frac{1}{C-1}\}\).
For each job, the mechanism favors its reference machine by a factor of \(\lambda_C\): the reference machine receives the job unless the other machine is more than \(\lambda_C\) times faster.

\begin{algorithm}[H]
\caption{Two-machine threshold mechanism}\label{alg:two-machine-threshold}
\KwIn{Prediction \(\widehat{\mathbf p}\), reports \(\mathbf p\), parameters \(C>1\) and \(\kappa\ge1\)}
Compute a fixed \(\kappa\)-approximate reference schedule \(\widehat\sigma\) for \(\widehat{\mathbf p}\), and set \(\lambda_C\leftarrow\max\{1,\frac{1}{C-1}\}\)\;
\For{each job \(j\in M\)}{
    Assign \(j\) to \(\widehat\sigma(j)\) temporarily\;
    \For{each machine \(k\in N\)}{
        \If{\(p_{\widehat\sigma(j),j}>\lambda_C p_{kj}\)}{
            Reassign \(j\) to \(k\)\;
        }
    }
}
\end{algorithm}

The rule is a job-wise weighted mechanism: job \(j\) has weight \(1\) on \(\widehat\sigma(j)\) and weight \(\lambda_C\) on every \(k\in N\setminus\{\widehat\sigma(j)\}\).
The reference machine wins a tie.

\begin{theorem}\label{thm:two-machine-upper}
For every \(C>1\) and \(\kappa\ge1\), the two-machine threshold mechanism is deterministic, strategyproof, \(\kappa C\)-consistent, and \(\max\{2,\frac{C}{C-1}\}\)-robust.
\end{theorem}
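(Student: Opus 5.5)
Strategyproofness follows exactly as in Lemma~\ref{lem:edge-skip-basic}. The rule is a job-wise weighted mechanism whose weights (\(1\) on \(\widehat\sigma(j)\), \(\lambda_C\) on the other machine) and tie-breaking depend only on the prediction. So for each job, lowering a machine's report can only make it gain the job, which gives weak monotonicity and threshold payments. Determinism is immediate.

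\textbf{Consistency.} Suppose \(\mathbf p=\widehat{\mathbf p}\). A job \(j\) leaves its reference machine only if the other machine \(k\) satisfies \(\widehat p_{\widehat\sigma(j),j}>\lambda_C\widehat p_{kj}\). Fix a machine \(k\) and split its load into two parts.
\begin{itemize}
\item Jobs with \(\widehat\sigma(j)=k\) that stay on \(k\) contribute at most \(T\le\kappa\OPT(\widehat{\mathbf p})\).
\item Jobs moved in from the other machine \(k'\) satisfy \(\widehat p_{kj}<\widehat p_{k'j}/\lambda_C\). Their total is less than \(T/\lambda_C\).
\end{itemize}
Hence the load of \(k\) is less than \((1+1/\lambda_C)T\). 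Since \(1/\lambda_C=\min\{1,C-1\}\le C-1\), this is at most \(C\kappa\OPT\).

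The case \(C\ge2\) (so \(\lambda_C=1\)) would give \(2\kappa\le C\kappa\), which is still fine. I need to check that the clean bound \(\kappa C\) holds in both regimes; the computation above appears to cover both.

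\textbf{Robustness.} I would reuse the partial-matching reduction from the proof of Theorem~\ref{thm:edge-skip}. Fix an arbitrary prediction and reports, and fix a machine \(h\). Its load is at most \(w(Q)\OPT(\mathbf p)\), where \(Q\) is a partial matching between the two machines and the jobs, weighted by \(r_{ij}/r_{hj}\). Since \(n=2\), \(Q\) has at most two edges:
\begin{itemize}
\item an edge \((h,j)\) of weight \(1\);
\item an edge \((i,j')\) with \(i\ne h\), of weight \(r_{ij'}/r_{hj'}\in\{\lambda_C,1,1/\lambda_C\}\).
\end{itemize}

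The naive bound is therefore \(w(Q)\le1+\lambda_C=\max\{2,C/(C-1)\}\). Indeed, \(1+1/(C-1)=C/(C-1)\), and for \(C\ge2\) the bound is \(2\). This already gives the claimed robustness, with no forest argument needed.

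\textbf{Main obstacle.} The main thing to double-check is the bound on a single edge \((i,j')\). Its weight equals \(\lambda_C\) exactly when \(\widehat\sigma(j')=h\), which is the bad case. Even then the sum is at most \(1+\lambda_C\), so the robustness claim holds.

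The subtler point is consistency. I must confirm that jobs moved into \(k\) are charged against the other machine's reference load \(T\) with factor \(1/\lambda_C\), and that no job is double counted. Each job moves at most once, so this seems fine.
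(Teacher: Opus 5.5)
Your proposal is correct and follows the paper's proof. Strategyproofness comes from report-independent weights. Consistency charges the jobs moved into a machine against the other machine's reference load with factor \(1/\lambda_C\le C-1\). Robustness bounds every job that the optimum places on the other machine by a factor \(\lambda_C\). Phrasing the robustness step through the two-edge partial matching of Theorem~\ref{thm:edge-skip} only repackages the paper's direct split of \(h\)'s load into jobs the optimum also puts on \(h\) and jobs it puts on the other machine.
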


\begin{proof}
The reference schedule and \(\lambda_C\) depend only on the public prediction and on the public parameters \(C\) and \(\kappa\).
Together with the fixed tie-breaking rule, these choices make the allocation deterministic.
Since the weights are fixed independently of the reports and each job is allocated independently, each per-job allocation is monotone in the reported processing time. Summing the resulting per-job inequalities gives weak monotonicity, so the allocation rule admits strategyproof payments.

We next prove consistency.
Suppose \(\mathbf p=\widehat{\mathbf p}\), write \(\mathbf x\) for the resulting allocation, and let \(T\) be the makespan of \(\widehat\sigma\) on \(\mathbf p\).
Since \(\widehat\sigma\) is \(\kappa\)-approximate, \(T\le\kappa\OPT(\mathbf p)\).
Fix a machine \(h\), and let \(k\) be the other machine.
The jobs that both the reference schedule and the mechanism assign to \(h\) have total processing time at most \(T\).
Every remaining job assigned to \(h\) has reference machine \(k\).
Such a job moves to \(h\) only if \(p_{kj}>\lambda_C p_{hj}\), and hence \(p_{hj}<\frac{p_{kj}}{\lambda_C}\).
These jobs form a subset of the reference load of \(k\), whose total is at most \(T\).
Consequently,
\begin{align*}
    L_h(\mathbf p,\mathbf x)
    \le T+\frac{T}{\lambda_C}
    \le C T
    \le \kappa C\OPT(\mathbf p).
\end{align*}
The inequality \(T+T/\lambda_C\le CT\) is an equality when \(1<C\le2\); when \(C>2\), it follows from \(\lambda_C=1\) and \(2<C\).
The bound holds for both machines, proving \(\kappa C\)-consistency.

For robustness, fix an arbitrary pair \((\mathbf p,\widehat{\mathbf p})\), let \(\mathbf x\) be the resulting allocation, and fix a machine \(h\).
Again let \(k\) be the other machine.
The jobs assigned to \(h\) by both \(\mathbf x\) and \(\mathbf x^*(\mathbf p)\) contribute at most \(\OPT(\mathbf p)\).
Consider instead a job \(j\) assigned to \(h\) by \(\mathbf x\) and to \(k\) by \(\mathbf x^*(\mathbf p)\).
If \(h\) is its reference machine, the threshold rule gives \(p_{hj}\le\lambda_C p_{kj}\).
If \(k\) is its reference machine, the job moves to \(h\), so \(p_{hj}<\frac{p_{kj}}{\lambda_C}\le\lambda_C p_{kj}\), where the last inequality uses \(\lambda_C\ge1\).
Thus every such job satisfies the same upper bound, and their total processing time on \(k\) under \(\mathbf x^*(\mathbf p)\) is at most \(\OPT(\mathbf p)\).
It follows that
\begin{align*}
    L_h(\mathbf p,\mathbf x)
    \le (1+\lambda_C)\OPT(\mathbf p).
\end{align*}
Finally, \(1+\lambda_C=\max\{2,\frac{C}{C-1}\}\).
Taking the maximum over the two machines proves the robustness claim.
\end{proof}

To compare with Theorem~\ref{thm:edge-skip}, specialize that theorem to \(n=2\) and set \(\gamma=C-1\). The consistency parameter in that theorem is then \((1+\gamma)\kappa=\kappa C\), so both mechanisms are \(\kappa C\)-consistent. If \(\kappa=1\), Theorem~\ref{thm:edge-skip} gives
\[
    R_{1}(C)
    =\max\left\{2,\frac{C}{C-1}\right\},
\]
so the two robustness guarantees coincide for every \(C>1\). For arbitrary \(\kappa\), they also coincide when \(C\ge2\), in which case both robustness bounds equal two.

When \(\kappa>1\) and \(1<C<2\), the threshold mechanism is strictly stronger: its robustness is \(C/(C-1)\), whereas the specialization of Theorem~\ref{thm:edge-skip} is
\[
    R_{\kappa}(\kappa C)
    =
    \begin{cases}
        \displaystyle 1+\frac{2-1/\kappa}{C-1},
        &1<C\le1+1/\kappa,\\[2mm]
        \displaystyle \frac{2}{C-1},
        &1+1/\kappa<C<2.
    \end{cases}
\]
Thus the appendix improves the two-machine guarantee in precisely this approximate-reference regime, while Theorem~\ref{thm:edge-skip} applies to an arbitrary number of machines.

\section{Additional Experimental Results}\label{app:experiments}

Figure~\ref{fig:hcsp-all} gives the complete seven-method HCSP comparison, and
Table~\ref{tab:hcsp-complete} reports every corresponding macro average. These
results use the same hierarchy and the same sole performance measure \(\rho\)
as the main experiments.

\begin{figure}[H]
    \centering
    \includegraphics[width=\linewidth]{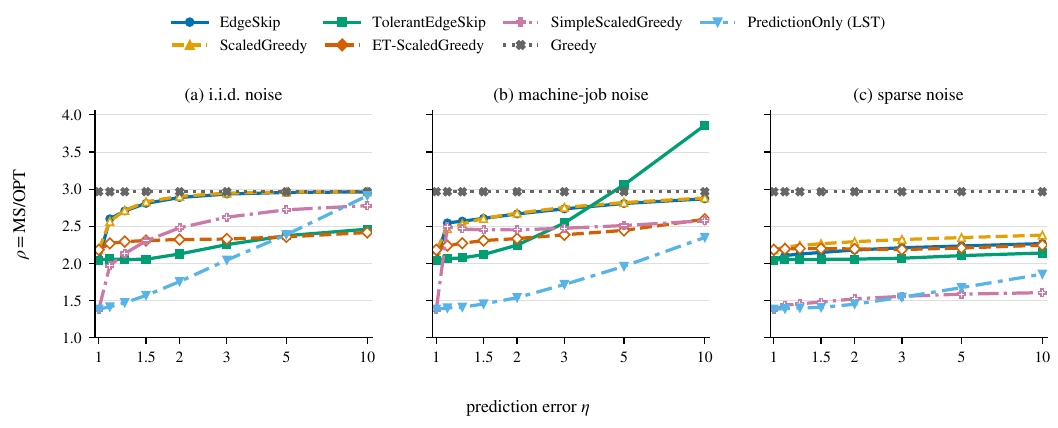}
    \caption{All seven methods on HCSP under i.i.d., machine--job, and sparse
    prediction errors. Each point is the equal-weight mean over the 12 source
    matrices after within-matrix averaging. The common perfect-prediction
    value is repeated at \(\eta=1\) in all three panels. Lower is better.}
    \label{fig:hcsp-all}
\end{figure}

\begin{table}[H]
  \centering
  \caption{Complete results on the classic HCSP matrices. Entries are source-matrix macro-averaged $\rho=\MS/\OPT$; lower is better. ES, SG, TES, ET-SG, SSG, and Pred.-LST denote \textsc{EdgeSkip}, \textsc{ScaledGreedy}, \textsc{TolerantEdgeSkip}, \textsc{ErrorTolerantScaledGreedy}, \textsc{SimpleScaledGreedy}, and \textsc{PredictionOnly} (LST), respectively. The perfect-prediction row is shared by all three noise families at $\eta=1$.}
  \label{tab:hcsp-complete}
  \small
  \setlength{\tabcolsep}{3.6pt}
  \resizebox{\columnwidth}{!}{%
  \begin{tabular}{llrrrrrrr}
    \toprule
    Prediction & $\eta$ & ES & SG & TES & ET-SG & SSG & Greedy & Pred.-LST \\
    \midrule
    Perfect & 1 & 2.0457 & 2.1837 & 2.0457 & 2.1837 & \textbf{1.3757} & 2.9652 & 1.3854 \\
    \addlinespace[2pt]
    i.i.d. & 1.1 & 2.6010 & 2.5667 & 2.0595 & 2.2693 & 1.9708 & 2.9652 & \textbf{1.4156} \\
     & 1.25 & 2.7063 & 2.7145 & 2.0529 & 2.2934 & 2.1334 & 2.9652 & \textbf{1.4710} \\
     & 1.5 & 2.8087 & 2.8301 & 2.0565 & 2.3084 & 2.3025 & 2.9652 & \textbf{1.5689} \\
     & 2 & 2.8875 & 2.9037 & 2.1271 & 2.3218 & 2.4790 & 2.9652 & \textbf{1.7553} \\
     & 3 & 2.9346 & 2.9471 & 2.2534 & 2.3283 & 2.6227 & 2.9652 & \textbf{2.0455} \\
     & 5 & 2.9559 & 2.9614 & 2.3735 & \textbf{2.3575} & 2.7225 & 2.9652 & 2.3878 \\
     & 10 & 2.9628 & 2.9642 & 2.4609 & \textbf{2.4165} & 2.7783 & 2.9652 & 2.9108 \\
    \addlinespace[2pt]
    Machine-job & 1.1 & 2.5456 & 2.4682 & 2.0650 & 2.2393 & 2.4941 & 2.9652 & \textbf{1.3999} \\
     & 1.25 & 2.5688 & 2.5326 & 2.0743 & 2.2716 & 2.4645 & 2.9652 & \textbf{1.4107} \\
     & 1.5 & 2.6065 & 2.6064 & 2.1196 & 2.3082 & 2.4522 & 2.9652 & \textbf{1.4547} \\
     & 2 & 2.6663 & 2.6769 & 2.2465 & 2.3342 & 2.4519 & 2.9652 & \textbf{1.5386} \\
     & 3 & 2.7348 & 2.7547 & 2.5442 & 2.3859 & 2.4736 & 2.9652 & \textbf{1.7165} \\
     & 5 & 2.8073 & 2.8199 & 3.0576 & 2.4450 & 2.5103 & 2.9652 & \textbf{1.9613} \\
     & 10 & 2.8699 & 2.8814 & 3.8601 & 2.5948 & 2.5736 & 2.9652 & \textbf{2.3500} \\
    \addlinespace[2pt]
    Sparse & 1.1 & 2.1062 & 2.2191 & 2.0513 & 2.1939 & 1.4367 & 2.9652 & \textbf{1.3902} \\
     & 1.25 & 2.1263 & 2.2435 & 2.0540 & 2.2011 & 1.4569 & 2.9652 & \textbf{1.4015} \\
     & 1.5 & 2.1481 & 2.2651 & 2.0545 & 2.1998 & 1.4799 & 2.9652 & \textbf{1.4079} \\
     & 2 & 2.1810 & 2.2931 & 2.0583 & 2.1949 & 1.5231 & 2.9652 & \textbf{1.4514} \\
     & 3 & 2.2109 & 2.3225 & 2.0702 & 2.1872 & 1.5569 & 2.9652 & \textbf{1.5449} \\
     & 5 & 2.2347 & 2.3474 & 2.1052 & 2.2078 & \textbf{1.5864} & 2.9652 & 1.6744 \\
     & 10 & 2.2670 & 2.3809 & 2.1408 & 2.2460 & \textbf{1.6070} & 2.9652 & 1.8562 \\
    \bottomrule
  \end{tabular}%
  }
\end{table}

\end{document}